\newcounter{mntcomm}
\begin{document}
\include{def}
\title{Particle Mean Field Variational Bayes}
\author{Minh-Ngoc Tran\thanks{\textit{Discipline of Business Analytics, The University of Sydney Business School}}
\and Paco Tseng\thanks{\textit{ARC Centre for Data Analytics for Resources and Environments (DARE)}}
\and Robert Kohn\thanks{\textit{School of Economics, UNSW Business School}}
}

\maketitle
\begin{abstract}
Mean Field Variational Bayes (MFVB) is one of the most computationally efficient Bayesian inference methods. However, its use has been restricted to models with conjugate priors or those that allow analytical calculations. This paper proposes a novel particle-based MFVB approach that greatly expands the applicability of the MFVB method. We establish the theoretical basis of the new method by leveraging the connection between Wasserstein gradient flows and Langevin diffusion dynamics, and demonstrate the effectiveness of this approach using Bayesian logistic regression, stochastic volatility, and deep neural networks.

{\bf Key words:} Bayesian computation, optimal transport, Bayesian deep learning
\end{abstract}
\section{Introduction}
The main challenge of Bayesian statistics is to conduct inference of a computationally intractable posterior distribution, $\pi(x) \propto \exp{\left(-U(x)\right)}$, $x\in\SetR^d$, generally only known up a normalising constant. To solve this problem, there are two main classes of computational methods that provide different approaches to approximate $\pi$. The first one is Markov chain Monte Carlo (MCMC) methods \citep{metropolis1953equation,hastings1970monte,robert1999monte}. For many years, MCMC has been the standard approach for Bayesian analysis because of its theoretical soundness. The method constructs a Markov chain to produce simulation consistent samples from the target distribution $\pi$. A general MCMC approach is the Metropolis-Hastings algorithm that generates a Markov chain by first generating a proposed state from a proposal distribution, then using an acceptance rule to decide whether to accept the proposal or stay at the current state \citep{robert1999monte}.   

Another, and often more efficient, class of MCMC methods is based on the Langevin dynamics
\begin{equation} \label{eq:Langevin dynamics}
    \d X_t = -\frac12\nabla U\left(X_t\right)\d t + \d B_t
\end{equation}
where $\left\{B_t\right\}_{t\geq 0}$ is the Brownian process on $\mathbb{R}^d$. This stochastic differential equation (SDE) characterises  the dynamics of the process $\{X_t\}_{t\geq 0}$
whose distribution, under some regularity conditions on the potential energy $U(x)$, converge to $\pi$ as the stationary distribution. 
In practice, however, it is necessary to work with a discretisation of the SDE in \eqref{eq:Langevin dynamics}, whose the distribution might not  converge to $\pi$ \citep{roberts1996exponential}.
The Metropolis-Hastings acceptance rule is then needed to correct for the error produced by the discretisation. This method is known as the Metropolis-adjusted Langevin algorithm (MALA) \citep{besag1994comments,roberts1996exponential}. 

The Metropolis-Hastings acceptance rule is necessary to guarantee for convergence of 
MCMC methods, but it also prevents the use of MCMC in big data and big model settings. 
This is because calculating the Metropolis-Hastings acceptance probability requires the full data (however, see, e.g. \cite{Quiroz:JASA2019} for speeding up MCMC using data subsampling),
and that this probability can easily get close to zero when the dimension $d$ is high.
Other limitations of MCMC methods include the need for a sufficient burn-in period for the generated Markov chain to be distributed as $\pi$,
and the absence of effective stopping criteria for checking convergence.
These limitations can be circumvented by the Variational Bayes method at the cost of some approximation accuracy.

Variational Bayes (VB) \citep{Waterhouse:NIPS1996,Attias:UAI1999,Blei:JASA2017} emerges as an alternative approach to inference in complex posterior distributions with large datasets. More recently, it grown in popularity due to its ability to scale up in terms of both the model complexity and data size.  
Different to MCMC, the VB method proposes a family of distributions $\mathcal{Q}$, called the variational family, and then identifies within $\mathcal{Q}$ the closest variational distribution to $\pi$ with respect to KL-divergence, i.e., 
\begin{equation} \label{eq:vb}
    q^* \in \arg\min_{q \in \mathcal{Q}} \left\{ \text{KL}\left(q || \pi\right)\right\}. 
\end{equation}
A representative example is Mean-Field VB (MFVB) (sec. \ref{sec:reviewMFVB}) which imposes a factorisation structure on the variational
distributions, i.e. $\mathcal{Q} = \left\{q=\prod q_i\right\}$, and $q^*$, if it exists and is unique, is called the mean-field approximation of $\pi$. 
An obvious limitation of MFVB is that it fails to capture the posterior dependence between the factorized blocks.
Despite this limitation, MFVB has been widely used in applications; see, e.g. \cite{Wand:BA2011}, \cite{Giordani:JCGS2013}, \cite{wang2013variational} and \cite{zhang2020theoretical}.
Implementation of MFVB relies on conjugate priors and the ability to calculate the associated expectations (sec. \ref{sec:reviewMFVB}).
As a result, it is challenging to apply standard MFVB for some simple models such as Bayesian logistic regression. \par

Our work aims at extending the scope of MFVB to makes it widely applicable by combining MFVB with the Langevin dynamics and circumventing the main issues of either method while maintaining the strengths of both, that is providing a scalable Bayesian inference algorithm with theoretical guarantees.
The new method, called Particle Mean field VB (PMFVB), leverages the Langevin dynamics to bypass the limitations in standard MFVB,
and employs the theory of Wasserstein gradient flows to establish its theoretical guarantee. 
Wasserstein gradient flows are a fundamental element of the Optimal Transport (OT) theory \citep{Ambrosio:OTbook,villani2009optimal}, that quantifies the dissimilarity between probability measures and introduces a differential structure into the space of probability measures. 
Inspired by fluid dynamics, \cite{Jordan:SIAM1998} introduce the concept of the gradient flow of a functional defined on this space,
which is a continuous curve of probability measures along which the functional is optimised.
It turns out that the gradient flow of KL-divergence functional is identical to the Langevin dynamics (sec. \ref{sec:Preliminaries}).
This important connection between Wasserstein gradient flows and  Langevin dynamics, and Stochastic Differential Equation (SDE) theory in general, provides the theoretical foundation for our PMFVB procedure.

\paradot{Our contribution} We study the KL-divergence functional on the space of factorised distributions $\mathcal{Q}$ equipped with the 2-Wasserstein distance, and show that 
the MFVB optimisation problem \eqref{eq:vb} has an unique solution $q^*$. We propose an algorithm for approximating the optimal mean field distribution $q^*$ by particles
that are moved by combining the classical MFVB framework with the Langevin diffusion.
We show that the distribution of these particles converges to $q^*$. We also study the posterior consistency of $q^*$ in terms of the data size.
The numerical performance of PMFVB is demonstrated using Bayesian logistic regression and Stochastic Volatility - the statistical models that classical MFVB methods, to the best of our knowledge, have not been successfully applied to. 
We then develop a variant of PMFVB for inference in Bayesian deep learning, in which modifications to standard PMFVB are introduced to make PMFVB suitable for deep neural networks.
For Bayesian inference in deep neural networks, Stochastic Gradient Langevin Dynamics (SGLD) of \cite{Welling.Teh:2011} is among the most commonly used methods.   
We discuss connections between PMFVB and SGLD, and numerically compare their performance using both simulated and real datasets.
Code implementing the examples is available at \url{https://github.com/VBayesLab/PMFVB}.

\paradot{Related work} Our work builds on recent advances in the Optimal Transport theory and Langevin dynamics.
The most closely related work to ours is \cite{Yao.Yang:2022MFVB-WGL} who, by focusing on a particular MFVB framework for statistical models with local latent variables,
combine Wasserstein gradient flows with MFVB for dealing with the intractability of the optimal MFVB factorised distribution.
Our PMFVB framework is more general than earlier works and can be applied to any statistical models including deep learning.
\cite{Galy-Fajou:2021Entropy} develop a particle-based Gaussian approximation that uses a flow of linear transformations for moving the particles; the resulting curve of Gaussian distributions can be viewed as an approximation of the Wasserstein gradient flow of the KL divergence functional.
\cite{Lambert:2022VI_WGL} study convergence results of Gaussian Variational Inference using the theory of Bures-Wasserstein gradient flow. They use particles to realise the flow of Gaussian approximations, and also extend Gaussian approximation to mixtures of Gaussians approximation. 

The variant of PMFVB for neural networks is related to SGLD and its variants \citep{Welling.Teh:2011,Chen:ICML2014,Li.et.al:2016,Kim.et.al:2022}.

\paradot{Notation} $\nabla f(x)$ denotes the gradient vector of scalar function $f$ defined on $\SetR^d$.
For a vector-valued function $v$ defined on $\SetR^d$, its divergence is $\nabla\cdot v(x) = \sum_j\frac{\partial v_j(x)}{\partial x_j}$.
$\Delta f(x)$ is the Laplacian of $f$, $\Delta f(x) = \nabla\cdot(\nabla f(x))=\sum_j\frac{\partial^2f(x)}{\partial x_j^2}$.  
For a generic set $\X \subset \mathds{R}^d$, we denote by $\mathcal P(\X)$ the set of probability measures on $\X$; and for a measure $q$, with some abuse of notation, we will denote by $q(\d x)$ and $q(x)$
the probability measure and density function, respectively.

\section{Preliminaries}\label{sec:Preliminaries}
This section collects the preliminaries on optimal transport theory and Langevin diffusion that are used in the paper.

\subsection{Wasserstein space and gradient flow} 
Consider a generic set $\X\subset\SetR^d$, let $\P_2(\mathcal X)$ be the set of absolutely continuous probability measures on $\X$ with finite second moments.
For any $p,q\in\mathcal P_2(\mathcal X)$, let 
\bea
W_2(p,q)&=&\min_{\gamma\in\Gamma(p,q)}\Big\{\Big(\int_{\mathcal X\times\mathcal X}\|x-y\|^2\gamma(dx\times dy)\Big)^{1/2}\Big\}\label{eq: Was def 1}\\
&=&\min_{T:T_{\#}p=q}\Big\{\Big(\int_{\mathcal X}\|x-T(x)\|^2p(dx)\Big)^{1/2}\Big\}\label{eq: Was def 2}
\eea
be the 2-Wasserstein metric on $\mathcal P_2(\mathcal X)$.
Here, $\Gamma(p,q)$ denotes the set of joint probability measures on $\mathcal X\times\mathcal X$ with the marginals $p$ and $q$,
and $T_{\#}(p)$ is the push forward measure of $p$, i.e.
\[T_{\#}(p)(A)=p(x:T(x)\in A),\;\;\;A\subset\mathcal X.\]
The existence of \eqref{eq: Was def 1} and \eqref{eq: Was def 2} and their equivalence is well studied; see, e.g., \cite{Ambrosio:OTbook}.
It is well-known that, equipped with this metric, $\mathcal P_2(\mathcal X)$ becomes a metric space, often called the {\it Wasserstein space}, denoted by $\mathbb{W}_2(\mathcal X)$.  
The Wasserstein space $\mathbb{W}_2(\mathcal X)$ has many attractive properties \citep{Ambrosio:OTbook,villani2009optimal} that make it possible to perform calculus on this space. In particular, $\mathbb{W}_2(\mathcal X)$ can viewed as a Riemannian manifold \citep{Otto:CPDE2001} whose rich geometry structure can be exploited to efficiently solve optimisation problems such as \eqref{eq:vb}.

Consider the functional $F(q)=\KL(q\|\pi)$ defined on $\mathbb{W}_2(\mathcal X)$, with some fixed measure $\pi\in\mathbb{W}_2(\mathcal X)$.
\cite{Jordan:SIAM1998} propose the following iterative scheme, known as the JKO scheme, to optimize $F(q)$.
Let $q^{(0)}\in\mathbb{W}_2(\mathcal X)$ be some initial measure and $\epsilon>0$. At step $k\geq0$, define
\beq\label{eq: JKO}
q^{(k+1)}:=\arg\min_{q\in\mathbb{W}_2(\mathcal X)}\Big\{F(q)+\frac{1}{2\eps}W_2^2(q,q^{(k)})\Big\}.
\eeq
Denote by $\frac{\delta F}{\delta q}(q):\SetR^d\mapsto\SetR$ a first variation at $q$ of the functional $F$, i.e.
\beq\label{eq: first variation}
\lim_{\epsilon\to0}\frac{F(q+\epsilon\xi)-F(q)}{\epsilon}=\int \frac{\delta F}{\delta q}(q)\d\xi
\eeq
for all $\xi\in\mathbb{W}_2(\mathcal X)$ such that $F(q+\epsilon\xi)$ is defined. 
The first variation, defined up to an additive constant, characterises the change of $F$ at $q$. 
Let $v^{(k)}(x)=\nabla\frac{\delta F}{\delta q}(q^{(k)})(x):\SetR^d\mapsto\SetR^d$; from \eqref{eq: first variation}, it can be shown that
\beq\label{eq: velocity}
v^{(k)}(x) = \nabla\log\pi(x)-\nabla\log q^{(k)}(x). 
\eeq
\cite{Jordan:SIAM1998} prove that (see also \cite[Chapter 8]{Santambrogio:OTbook} and \cite[Chapter 10]{Ambrosio:OTbook}), as $\eps\to0$, the discrete-time solution $\{q^{(k)}\}_{k=0,1,...}$ from \eqref{eq: JKO} converges to the continuous-time
 solution $\{q_t\}_{t\geq0}$ of the {\it continuity equation}
 \beq\label{eq: continuity equation}
 \frac{\partial q_t(x)}{\partial t}+\nabla\cdot\big(q_t(x) v_t(x)\big)=0
 \eeq
with $v_t(x)=\nabla\frac{\delta F}{\delta q}(q_t)(x)=\nabla\log\pi(x)-\nabla\log q_t(x)$.
For $\psi_t(x)=\log(q_t(x)/\pi(x))$, by noting that $\int ({\d\log q_t(x)}/{\d t})q_t(x)dx=0$, we have
\bean
\frac{\d F(q_t)}{\d t}=\int\psi_t(x)\frac{\partial q_t(x)}{\partial t}dx&=&-\int\psi_t(x)\nabla\cdot(q_t(x) v_t(x))\d x\\
&=&\E_{q_t}<\nabla\psi_t, v_t>\\
&=&-\E_{q_t}\big(\|v_t(x)\|^2\big)<0,
\eean
which justifies that the curve $\{q_t\}_{t\geq0}$, called the {\it gradient flow}, minimizes the KL functional $F(q)$.

\subsection{Langevin Monte Carlo diffusion} \label{subsec:LMC}
Let $\pi(\d x)$ be a target probability measure defined on $\X\subset\SetR^d$ with density $\pi(x)$.
The Langevin diffusion is the stochastic process $\{L_t\}_{t\geq 0}$ governed by the SDE
\bea\label{eq: Langegin diffusion}
L_0 &\sim& p_0\notag \\ 
\d L_t&=&\frac{1}{2}\nabla\log\pi(L_t)\d t+\d B_t,\;\;\;t>0
\eea
where $B_t$ is the $d-$dimensional Brownian motion and $p_0$ is an initial distribution on $\X\subset\SetR^d$.
Under some regularity conditions on $U(x)=-\log\pi(x)$, this SDE has an unique solution which is an ergodic Markov process with the invariant distribution $\pi(\d x)$ \cite[Chapter 4]{Pavliotis:SDEbook}.

Let $q_t(x)$ be the probability density (w.r.t. the Lebesgue measure on $\SetR^d$) of $L_t$; then $\{q_t\}_{t\geq0}$ is the unique solution of the 
{\it Fokker–Planck equation} (often called the forward Kolmogorov equation in the probability literature)
\bea
\frac{\partial q_t(x)}{\partial t}&=& -\nabla\cdot\Big(q_t(x)\nabla\log\pi(x)\Big)+\Delta q_t(x),\;\;t>0\label{eq:FPE}\\
q_0(x)&=& p_0(x);
\eea
see \cite{Pavliotis:SDEbook}, Chapter 4.
One can easily check that 
\beqn
\nabla\cdot\Big(q_t(x)\nabla\log\pi(x)\Big)-\Delta q_t(x)=\nabla\cdot\big(q_t(x) v_t(x)\big),
\eeqn
with $v_t(x)=\nabla\log\pi(x)-\nabla\log q_t(x)$; hence, the Fokker–Planck equation \eqref{eq:FPE} is identical to the continuity equation in \eqref{eq: continuity equation}.
Therefore the curve $\{q_t\}_{t\geq0}$ induced by the Langevin dynamics can be viewed as a gradient flow that minimises some sort of a discrepancy between $q_t$ and $\pi$; see, \cite{Dalalyan:JRSSB2017} and \cite{Cheng:ALT2018}.

For a fixed $h>0$, consider the following Langevin Monte Carlo (LMC) diffusion which is a time-continuous discretization approximation of \eqref{eq: Langegin diffusion}
\beq\label{eq: Langegin MC diffusion}
\d X_t^h=\frac{1}{2}\nabla\log\pi(X_{\tau(t)}^h)\d t+\d B_t,\;\;\;t\geq0,
\eeq
where $\tau(t):=kh$ if $t\in[kh,(k+1)h)$. Equation \eqref{eq: Langegin MC diffusion} implies that, 
at the time points $kh$, $k=0,1,...$, we have
\beq\label{eq: Langegin MC diffusion 2}
X_{(k+1)h}^h=X_{kh}^h+\frac{h}{2}\nabla\log\pi(X_{kh}^h)+\sqrt{h}\eta_k, \;\;\eta_k\stackrel{iid}{\sim}N(0,I_d).
\eeq
Denote by $\mu_t^h$ the distribution of $X_t^h$, $t\geq0$, from the LMC diffusion \eqref{eq: Langegin MC diffusion}.
\cite{Cheng:ALT2018} prove the following lemma, which says that the functional $F(\cdot)$ is reduced along the curve $\{\mu_t^h\}_{t\geq0}$.

\begin{lemma}\label{lem:Cheng and Bartlett}[Cheng and Bartlett, 2018, Lemma 1]
Suppose that $U(x)=-\log\pi(x)$ is strongly convex and has a Lipschitz continuous gradient. That is, there exist constants $c>0$ and $C>0$ such that
\[cI_d\leq\nabla^2U(x)\leq CI_d,\;\;\;\text{ for all }x\in\X.\]
Let $F(\mu_t^h)=\KL(\mu_t^h\|\pi)$.
Then, when the step size $h$ is sufficiently small
\beq
\frac{\d F(\mu_t^h)}{\d t} \leq 0,\;\;\;t>0.
\eeq
\end{lemma}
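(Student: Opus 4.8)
The plan is to track the time derivative of $F(\mu_t^h)=\KL(\mu_t^h\|\pi)$ along the interpolated chain \eqref{eq: Langegin MC diffusion} and to show that it equals a negative relative-Fisher-information dissipation term up to a discretisation error of order $h$, which is negligible once $h$ is small. First I would write down the Fokker--Planck equation obeyed by $\mu_t^h$. On an interval $t\in[kh,(k+1)h)$ the drift in \eqref{eq: Langegin MC diffusion} is frozen at $\tfrac12\nabla\log\pi(X_{kh}^h)$, so $\mu_t^h$ solves $\partial_t\mu_t^h=-\nabla\cdot(\mu_t^h\,b_t)+\tfrac12\Delta\mu_t^h$ with the conditionally-averaged drift $b_t(x)=\tfrac12\,\E[\nabla\log\pi(X_{kh}^h)\mid X_t^h=x]$; this is the standard device for turning the non-Markovian interpolation into a local transport equation. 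Writing $b_t(x)=\tfrac12\nabla\log\pi(x)+\tfrac12 E_t(x)$ with the error field $E_t(x)=\E[\nabla\log\pi(X_{kh}^h)-\nabla\log\pi(X_t^h)\mid X_t^h=x]$, rewriting $\tfrac12\Delta\mu_t^h=\tfrac12\nabla\cdot(\mu_t^h\nabla\log\mu_t^h)$, and differentiating $F$ under the integral sign and integrating by parts exactly as in the preliminary computation in Section~\ref{sec:Preliminaries} (using $\int\partial_t\mu_t^h=0$), I would obtain
\[
\frac{\d F(\mu_t^h)}{\d t}=-\frac12\,\E_{\mu_t^h}\Big[\big\|\nabla\log\tfrac{\mu_t^h}{\pi}\big\|^2\Big]+\frac12\,\E_{\mu_t^h}\Big[\big\langle E_t,\ \nabla\log\tfrac{\mu_t^h}{\pi}\big\rangle\Big].
\]

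Young's inequality applied to the cross term then gives $\frac{\d F(\mu_t^h)}{\d t}\le-\tfrac14\,\mathcal I(\mu_t^h\|\pi)+\tfrac14\,\E_{\mu_t^h}[\|E_t\|^2]$, where $\mathcal I(\mu\|\pi)=\E_\mu\|\nabla\log(\mu/\pi)\|^2$ is the relative Fisher information. Next I would bound the error. By Jensen's inequality (twice) and the assumed $C$-Lipschitz gradient of $U=-\log\pi$, $\E_{\mu_t^h}[\|E_t\|^2]\le C^2\,\E\|X_t^h-X_{kh}^h\|^2$; since $X_t^h-X_{kh}^h=-\tfrac{t-kh}{2}\nabla U(X_{kh}^h)+(B_t-B_{kh})$ with $t-kh\le h$ and $B_t-B_{kh}$ independent of $X_{kh}^h$, this is at most $\tfrac{C^2h^2}{2}\E\|\nabla U(X_{kh}^h)\|^2+2C^2hd$. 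The moment $\E\|\nabla U(X_{kh}^h)\|^2\le C^2\E\|X_{kh}^h-x^\ast\|^2$ (with $x^\ast=\arg\min U$, $\nabla U(x^\ast)=0$) is bounded uniformly in $k$: for $h$ below a fixed threshold the map $x\mapsto x-\tfrac h2\nabla U(x)$ is a contraction by strong convexity, so $\E\|X_{(k+1)h}^h-x^\ast\|^2\le\rho\,\E\|X_{kh}^h-x^\ast\|^2+hd$ for some $\rho<1$, and iterating yields a uniform bound; hence $\E_{\mu_t^h}[\|E_t\|^2]\le C_0 h$ for a constant $C_0$ depending only on $c$, $C$, $d$ and the initial moment. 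Finally, $c$-strong convexity of $U$ gives, via Bakry--Émery, a log-Sobolev inequality for $\pi$, so $\mathcal I(\mu_t^h\|\pi)\ge 2c\,F(\mu_t^h)$; combining, $\frac{\d F(\mu_t^h)}{\d t}\le-\tfrac c2\,F(\mu_t^h)+\tfrac14 C_0h$.

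I expect the genuinely delicate point to be turning this bound into ``$\frac{\d F}{\d t}\le 0$ for all $t$'' uniformly: the dissipation $-\tfrac c2 F$ only beats the $O(h)$ bias while $F(\mu_t^h)\ge C_0h/(2c)$, and this has to be connected to the fact that the interpolated chain has not yet equilibrated near the biased invariant law $\pi_h\neq\pi$ of \eqref{eq: Langegin MC diffusion 2}. The working content of the lemma is therefore that $F$ is non-increasing down to an $O(h)$ floor, with ``sufficiently small $h$'' making that floor negligible for the subsequent use of the result. A subsidiary technicality is justifying the differentiation under the integral sign and the integration by parts, which needs mild tail and regularity control on $\mu_t^h$ — available here because $\pi$ is strongly log-concave and each LMC step convolves with a Gaussian, smoothing the density.
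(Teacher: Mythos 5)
The paper does not actually prove this lemma; it is imported from Cheng and Bartlett (2018), so there is no in-paper argument to compare yours against. Your reconstruction is the standard interpolation analysis of the unadjusted Langevin algorithm used in that line of work: write the Fokker--Planck equation for $\mu_t^h$ with the conditionally averaged frozen drift, split $\frac{\d}{\d t}\KL(\mu_t^h\|\pi)$ into the dissipation $-\tfrac12\E_{\mu_t^h}\big\|\nabla\log(\mu_t^h/\pi)\big\|^2$ plus a cross term with the error field $E_t$, apply Young's inequality, bound $\E\|E_t\|^2\le C^2\,\E\|X_t^h-X_{kh}^h\|^2=O(h)$ via the Lipschitz gradient together with a uniform-in-$k$ second-moment bound obtained from the strong-convexity contraction, and finally invoke the Bakry--\'Emery log-Sobolev inequality to convert Fisher information into KL. Each of these steps is correct as you have written it, and the algebra checks out.

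The substantive point, which you are right to single out, is that this argument yields $\frac{\d}{\d t}F(\mu_t^h)\le-\tfrac c2F(\mu_t^h)+O(h)$ rather than the unconditional $\frac{\d}{\d t}F(\mu_t^h)\le0$ stated in the lemma. The unconditional statement is in fact false as written: initialise at $p_0=\pi$, so $F(\mu_0^h)=0$; the interpolated chain drifts toward its biased stationary law $\pi_h\neq\pi$, so $F(\mu_t^h)>0$ for small $t>0$ and the derivative must be positive somewhere. What the cited source actually establishes, and what the downstream use in Theorem \ref{the: Convergence of the particle MFVB} really needs, is monotone decrease of $F$ down to an $O(h)$ floor that shrinks with $h$. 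Your proof is therefore a faithful reconstruction of the true result; the discrepancy lies in the paper's informal restatement of the lemma, not in your argument. The only minor technicality you gloss over (as does the paper) is the tension between working on a compact $\X$ and running an unreflected diffusion with Gaussian increments, but that affects the lemma's statement as much as your proof.
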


Although the Langevin dynamics in \eqref{eq: Langegin diffusion} converges to the invariant distribution $\pi(\d x)$, its convergence rate is not optimal \citep{Pavliotis:SDEbook}. Many studies have aimed to improve the speed of convergence. A simple yet effective method is to add a momentum term to the drift coefficient $\nabla\log\pi(x)$; see, e.g., \cite{Hwang:AAP2005} and \cite{Kim.et.al:2022}.
We will use accelerated Langevin dynamics in our implementation of the PMFVB method in Section \ref{sec: PMFVB-NN}.


\section{Mean Field Variational Bayes}\label{sec:reviewMFVB}
We are concerned with the problem of approximating a target probability measure $\pi(\d x\times\d y)$ defined on $\Theta=\X\times\Y\subset\SetR^{d_x}\times\SetR^{d_y}$ with density $\pi(x,y)$ (with respect to some reference measure such as the Lebesgue measure). The methodology proposed in this paper can be easily extended to cases of more than two blocks, $\Theta=\X_1\times\X_2\times\cdots\times\X_k$, $k>2$.
MFVB approximates $\pi(\cdot)$ by a probability measure $q(\d x\times\d y)=q_x(\d x)q_y(\d y)$, where $q_x(\d x)\in\mathcal P(\X)$ and $q_y(\d y)\in\mathcal P(\Y)$.
Consider the following optimisation problem
\beq\label{eq: MFVB 1}
q^*\in\arg\min_{q\in \mathcal P(\X)\otimes\mathcal P(\Y)}\Big\{F(q)=\KL(q\|\pi)\Big\}.
\eeq
We study in Section \ref{sec: MFVB propoerties} conditions under which the problem in \eqref{eq: MFVB 1} is well-defined and has a unique solution.
Define
\beq\label{eq: optimal MFVB}
q_x^*(x)\propto\exp\Big(\E_{q_y}\big[\log\pi(x,y)\big]\Big),\;\;\;\;q_y^*(y)\propto\exp\Big(\E_{q_x}\big[\log\pi(x,y)\big]\Big).
\eeq
MFVB turns the optimisation problem \eqref{eq: MFVB 1} into the following coordinate-descent-type problem:
\beq\label{eq: MFVB 2}
\text{given $q_y$ solve: }\min_{q_x\in\mathcal P(\X)}\Big\{\KL(q_x\|q_x^*)\Big\},
\eeq
and
\beq\label{eq: MFVB 3}
\text{given $q_x$ solve: }\min_{q_y\in\mathcal P(\Y)}\Big\{\KL(q_y\|q_y^*)\Big\}.
\eeq
Assuming that $q_x^*(x)$ and $q_y^*(y)$ have a standard form
and that the expectations in \eqref{eq: optimal MFVB} can be computed,
the solutions in \eqref{eq: MFVB 2}-\eqref{eq: MFVB 3} are  $q_x^*(x)$ and $q_y^*(y)$ respectively.
These assumptions limit the use of MFVB to simple cases.
For example, even in the simple Bayesian logistic regression model,
MFVB cannot be used as the assumptions above are not satisfied.  
The next section proposes a method for solving \eqref{eq: MFVB 1} without making these assumptions, and Section \ref{sec: MFVB propoerties} studies the theoretical properties of $F(q)$ and its minimizer $q^*$.

\section{Particle Mean Field Variational Bayes}\label{sec:pMFVB}
We now present our particle MFVB procedure.
The key idea is that whenever the optimal solutions $q_x^*(x)$ and $q_y^*(y)$ 
of \eqref{eq: MFVB 2} and \eqref{eq: MFVB 3} are unavailable in closed form,  
we use Langevin Monte Carlo diffusions to iteratively approximate them.
We assume below that both $q_x^*(x)$ and $q_y^*(y)$ are intractable; in MFVB applications where the variational distribution is factorized into $K$ blocks, $q = q_1\times q_2\times  \cdots \times q_K $, one only needs to use Langevin Monte Carlo diffusions to approximate those optimal solutions $q_k^*$ that are intractable.
Particle MFVB works more efficiently when many but a few of the optimal $q_k^*$ are tractable.
Lemma \ref{lem:Cheng and Bartlett} suggests that the KL functional $F(q)$ decreases after each iteration,
together with the result that $F(q)$ has the unique solution (c.f. Corollary \ref{cor: unique solution}),
this justifies our particle MFVB procedure. Theorem \ref{the: Convergence of the particle MFVB} provides a formal proof. 

We use a set of particles $\{X_i^{(t)},i=1,...,M\}$ to approximate $q_x^*(x)$ at iteration $t$,
and $\{Y_i^{(t)},i=1,...,M\}$ to approximate $q_y^*(y)$, $t\geq 1$.
We note that, unlike Section \ref{sec:Preliminaries} where $t$ denotes  continuous time, in this section $t$ denotes the $t$th iteration in the PMFVB algorithm.
Given $q_y^{(t)}(\d y)$ which is approximated by $\wh q_y^{(t)}(\d y)=\frac{1}{M}\sum_{i}\delta_{Y_i^{(t)}}(\d y)$, a Langevin Monte Carlo diffusion is used to approximate $q_x^*(x)\propto\exp\Big(\E_{q_y^{(t)}}\big[\log\pi(x,y)\big]\Big)$:
\beq
X_i^{(t+1)}=X_i^{(t)}+\frac{h_x}{2}\E_{q_y^{(t)}}\big[\nabla_x\log\pi(X_{i}^{(t)},y)\big]+\sqrt{h_x}\eta_{x,i},\;\;\;i=1,...,M
\eeq
with $\eta_{x,i}\sim N_{d_x}(0,I)$. The term $\E_{q_y^{(t)}}\big[\nabla_x\log\pi(X_{i}^{(t)},y)\big]$ can be approximated using a subset of $\{Y_i^{(t)},i=1,...,M\}$ by
$\frac{1}{m}\sum_{k=1}^m\nabla_x\log\pi(X_{i}^{(t)},Y_{i_k}^{(t)})$,
where $\{Y_{i_k}^{(t)},k=1,...,m\}$ is a random subset of size $m$ from $\{Y_i^{(t)},i=1,...,M\}$.
That is,
\beq\label{eq: update X}
X_i^{(t+1)}=X_i^{(t)}+\frac{h_x}{2m}\sum_{k=1}^m\nabla_x\log\pi(X_{i}^{(t)},Y_{i_k}^{(t)})+\sqrt{h_x}\eta_{x,i},\;\;\;i=1,...,M.
\eeq
Similarly, given $q_x^{(t+1)}(\d x)$ approximated by the particles $\{X_i^{(t+1)},i=1,...,M\}$,
we use a Langevin Monte Carlo diffusion to approximate $q_y^*(y)\propto\exp\Big(\E_{q_x^{(t+1)}}\big[\log\pi(x,y)\big]\Big)$:
\beq\label{eq: update Y}
Y_i^{(t+1)}=Y_i^{(t)}+\frac{h_y}{2m}\sum_{k=1}^m\nabla_y\log\pi(X_{i_k}^{(t+1)},Y_{i}^{(t)})+\sqrt{h_y}\eta_{y,i},\;\;\;i=1,...,M
\eeq
with $\eta_{y,i}\sim N_{d_y}(0,I)$.
Algorithm \ref{Alg: particle MFVB} summarises the procedure.

\begin{algorithm}[H]
\caption{Particle MFVB}\label{Alg: particle MFVB}
\begin{algorithmic}[1]
\Procedure{PMFVB}{$M,\epsilon, q^0_{x}, q^0_{y}$}
\State Input: number of particles $M$, tolerance $\epsilon>0$, initial distributions $q^{(0)}_{x}$ and $q^{(0)}_{y}$.
\State Initialise $X_i \sim q^{(0)}_{x}$ and $Y_i \sim q^{(0)}_{y},~ i = 1,\dots M$. $t\leftarrow0$.
\While{\textbf{Not} $S(\epsilon)$}
    \State Update $\{X_i^{(t+1)},i=1,...,M\}$ as in \eqref{eq: update X}
    \State Update $\{Y_i^{(t+1)},i=1,...,M\}$ as in \eqref{eq: update Y}
    \State $t\leftarrow t+1$.
\EndWhile
\EndProcedure
\end{algorithmic}
\end{algorithm}
In Algorithm \ref{Alg: particle MFVB}, $S(\epsilon)$ denotes a stopping rule as a function of some tolerance $\epsilon>0$. The computational complexity in each iteration is $O(mM)$. Once the subset $\{Y_{i_k}^{(t)},k=1,...,m\}$ has been selected, the update in \eqref{eq: update X} can be paralellised across the particles $i$
as there is no communication required between the particles.
Similarly, the update in \eqref{eq: update Y} can also be parallelised.

We now discuss the stopping rule. When a validation data set is available, as is typically the case in deep learning applications,
one can use a stopping rule based on the validation error and stop 
the iteration in Algorithm \ref{Alg: particle MFVB} if the validation error, or a rolling window smoothed version of it, no longer decreases.
This stopping rule is recommended in Section \ref{sec: PMFVB-NN} where PMFVB is used for training deep neural networks.
Alternatively, the update in Algorithm \ref{Alg: particle MFVB} can be stopped using the lower bound.
Let $\pi(x,y)=\wt\pi(x,y)/C$ with $C$ the normalising constant. Then,
\beqn
\KL(q\|\pi)= -\mathcal{L}(q)+\log C,
\eeqn
where $\mathcal{L}(q)$ is the lower bound term 
\beqn
\mathcal{L}(q)=\int\log\wt\pi(x,y)q(\d x,\d y)+H(q),\;\;\;H(q)=-\int\log q(x,y)q(\d x,\d y).
\eeqn
The entropy term $H(q)$ encourages the spread of the particles to avoid their collapse to a degenerate distribution.
However, as the LMC diffusion already spreads the particles by adding a Gaussian noise to them (c.f. \eqref{eq: Langegin MC diffusion 2}), hence circumventing convergence to a degenerate measure. At the $t$th iteration, given the $M$ particles $\{X_i^{(t)},Y_i^{(t)}\}_{i=1}^M$ approximating $q$, we suggest approximating the entropy $H(q)$ by $-(1/M)\sum_{i=1}^M\log(1/M)=\log(M)$.
The lower bound term at the $t$th iteration is approximated by
\beq\label{eq: LB estimate}
\widehat{\mathcal{L}}=\frac{1}{M}\sum_i\log\wt\pi(X_i^{(t)},Y_i^{(t)})+\log(M).
\eeq

\section{Theoretical analysis of particle MFVB}\label{sec: Theoretical analysis}
In order to avoid technical complications, we will assume in this section that $\X$ and $\Y$ are compact sets in $\SetR^{d_x}$ and $\SetR^{d_y}$, respectively. All proofs of the theorems and corollaries are in the Appendix.

\subsection{Properties of functional $F(q)$ on the Wasserstein space}\label{sec: MFVB propoerties} 
We first study the properties of the KL functional $F(q)$ on the Wasserstein space $\mathcal Q=\mathbb{W}_2(\mathcal X)\otimes \mathbb{W}_2(\mathcal Y)$.
To the best of our knowledge, there is no previous work studying the theoretical properties of the MFVB problem \eqref{eq: MFVB 1}. By limiting $P(\X)\otimes\mathcal P(\Y)$ to the 
Wasserstein space $\mathbb{W}_2(\mathcal X)\otimes \mathbb{W}_2(\mathcal Y)$, the theorem below shows that the optimal MFVB distribution exists and is unique. 

\begin{theorem}\label{the: theorem 1} 
Assume that $\X$ and $\Y$ are compact sets and that $\pi(x,y)$ is continuous in both $x$ and $y$. Then,
\begin{itemize}
\item[(i)] $F(q)$ is lower semi-continuous (w.r.t. the weak convergence on $\Q$).
\item[(ii)] $F(q)$ is convex.
\end{itemize}
\end{theorem}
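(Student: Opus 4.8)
The plan is to establish each part separately, relying on the compactness of $\X\times\Y$ and the continuity of $\pi(x,y)$ to control the behaviour of the KL functional.

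For part (i), lower semicontinuity, I would invoke the well-known joint lower semicontinuity of relative entropy $(\mu,\nu)\mapsto\KL(\mu\|\nu)$ with respect to weak convergence of probability measures (a standard fact, e.g. via the Donsker--Varadhan variational representation $\KL(\mu\|\nu)=\sup_{f\in C_b}\{\int f\,d\mu-\log\int e^f\,d\nu\}$, which exhibits $\KL$ as a supremum of weakly continuous affine functionals and hence is weakly lower semicontinuous). The only thing to check is that if $q^{(n)}=q_x^{(n)}\otimes q_y^{(n)}\to q$ weakly on $\Q$, then in particular $q^{(n)}\to q$ weakly as measures on $\Theta=\X\times\Y$ and the reference measure $\pi$ is fixed, so $F(q)=\KL(q\|\pi)\le\liminf_n F(q^{(n)})$ follows immediately. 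Here compactness of $\X$ and $\Y$ is convenient because it makes $\mathcal P(\X)\otimes\mathcal P(\Y)$ tight and the product structure is preserved under weak limits.

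For part (ii), convexity, the subtlety is that $q\mapsto\KL(q\|\pi)$ is convex in $q$ as a function on the full space $\mathcal P(\Theta)$, but the set of product measures $\mathcal P(\X)\otimes\mathcal P(\Y)$ is \emph{not} convex, so one cannot directly restrict the known convexity. Instead, "convex" here must be interpreted as convexity along the natural interpolations in $\Q$ — either separate (biconvex) convexity in $q_x$ and $q_y$, or more likely geodesic convexity. For the biconvexity reading the argument is short: fixing $q_y$, write $\KL(q_x\otimes q_y\|\pi)=\KL(q_x\|q_x^*)+\text{const}(q_y)$ using \eqref{eq: optimal MFVB}, and $q_x\mapsto\KL(q_x\|q_x^*)$ is convex; symmetrically in $q_y$. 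I would first write out this decomposition carefully, verifying that the cross term $\int\log q_y\,d(q_x\otimes q_y)$ and the term $-\int\log q_x^*\,d q_x$ assemble correctly and that $q_x^*$ is a genuine probability density (this uses $\pi$ continuous and $\Y$ compact so the expectation $\E_{q_y}[\log\pi(x,y)]$ is finite and its exponential is integrable over the compact $\X$).

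The main obstacle I anticipate is pinning down and proving the \emph{right} notion of convexity in part (ii): if the intended statement is geodesic convexity of $F$ along $W_2$-geodesics within $\Q$, then because $\Q$ is not geodesically convex in $\mathbb{W}_2(\Theta)$ one needs the product-space geodesic structure (a geodesic in $\mathbb{W}_2(\X)\otimes\mathbb{W}_2(\Y)$ is the product of geodesics in each factor), and then combine the known displacement convexity of the entropy $\int q\log q$ and of the potential energy $\int V\,dq$ under McCann's conditions. I would handle this by reducing to the one-factor statements along each coordinate geodesic and showing the product geodesic makes the map $t\mapsto F(q_t)$ a sum of convex functions; the potential-energy part requires checking that $-\log\pi$ restricted appropriately is convex along such geodesics, which on a compact domain with merely continuous $\pi$ may force the biconvexity interpretation rather than full displacement convexity. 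So concretely: Step 1, establish (i) via Donsker--Varadhan; Step 2, derive the decomposition $F(q_x\otimes q_y)=\KL(q_x\|q_x^*)+g(q_y)$; Step 3, conclude separate convexity, and if geodesic convexity is meant, lift to product geodesics and invoke displacement convexity of entropy plus the appropriate convexity of the interaction with $\pi$.
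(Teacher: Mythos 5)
For part (i) your route is correct but genuinely different from the paper's. The paper decomposes $F(q)=F_x(q_x)+F_y(q_y)+F_{xy}(q)$ into the two marginal entropies plus the cross term $\int(-\log\pi)\,\d q$, deduces from $W_2(q^n,q^*)\to 0$ that each marginal converges weakly, invokes Proposition~7.7 of \cite{Santambrogio:OTbook} for lower semi-continuity of the entropy terms, and uses boundedness of $\log\pi$ on the compact $\X\times\Y$ to get continuity of the cross term. You instead apply the joint weak lower semi-continuity of $\KL(\cdot\|\pi)$ (Donsker--Varadhan) directly after noting that $W_2$-convergence in $\Q$ implies weak convergence on $\Theta$. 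Your argument is shorter and does not even need continuity of $\pi$ for this part; the paper's decomposition has the side benefit of being reused in its convexity argument. Both are sound.

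For part (ii), however, you never actually prove the statement in the form the paper uses it. The paper's notion of convex combination in $\Q$ is $q_t=(tq_x^{(1)}+(1-t)q_x^{(2)})\otimes(tq_y^{(1)}+(1-t)q_y^{(2)})$ --- mix each marginal linearly and re-form the product --- and its proof is the decomposition above: $F_x$ and $F_y$ are convex because $z\mapsto z\log z$ is, and $F_{xy}$ is asserted to be linear. Your biconvexity argument (fix $q_y$, write $F=\KL(q_x\|q_x^*)+\mathrm{const}(q_y)$, convex in $q_x$; symmetrically in $q_y$) is correct but strictly weaker: separate convexity in each coordinate does not imply convexity along the joint interpolation, and in particular it would not deliver the uniqueness of the minimiser in Corollary~\ref{cor: unique solution}, which is the entire purpose of part (ii). Your geodesic-convexity alternative is also not what the paper intends, and, as you note, displacement convexity of $-\log\pi$ is unavailable for merely continuous $\pi$. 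So the concrete gap is that you identify several candidate notions of convexity but commit to proving only the one that is too weak. That said, your caution is well placed: under the marginal-mixture interpolation the cross term $F_{xy}(q_t)=\iint(-\log\pi)\,\d q_{x,t}\,\d q_{y,t}$ is \emph{bilinear} in $(q_{x,t},q_{y,t})$, not linear --- it is linear only under mixtures of the joint measures, which leave $\Q$ --- so any complete proof of (ii) must handle this cross term explicitly, which neither your proposal nor the paper's one-line ``linear hence convex'' remark does.
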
 

\begin{corollary}\label{cor: unique solution}
Under the assumptions in Theorem \ref{the: theorem 1}, $F(q)$ has an unique minimizer on $\Q$. 
\end{corollary}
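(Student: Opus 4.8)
The plan is to combine the two parts of Theorem \ref{the: theorem 1} with a compactness argument via the direct method of the calculus of variations. Since $\X$ and $\Y$ are compact, the spaces $\mathcal P(\X)$ and $\mathcal P(\Y)$ are weakly compact (Prokhorov's theorem — tightness is automatic on a compact set), and hence so is the product $\Q = \mathcal P(\X)\otimes\mathcal P(\Y)$ with the product weak topology. First I would take a minimizing sequence $\{q^{(n)}\}\subset\Q$ with $F(q^{(n)})\to\inf_{\Q}F$; note the infimum is finite because $F(q_x^{(0)}q_y^{(0)})<\infty$ for, say, uniform $q_x^{(0)},q_y^{(0)}$ on the compact sets, as $\pi$ is continuous and hence bounded away from $0$ and $\infty$ on $\Theta$. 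By weak compactness, extract a subsequence $q^{(n_k)}\rightharpoonup q^*$ for some $q^*=q_x^*q_y^*\in\Q$ (one checks that the weak limit of a product of marginals is the product of the limiting marginals). By part (i), lower semi-continuity gives $F(q^*)\le\liminf_k F(q^{(n_k)})=\inf_\Q F$, so $q^*$ is a minimizer. This establishes existence.

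For uniqueness, I would invoke part (ii). Suppose $q^{(1)}$ and $q^{(2)}$ are both minimizers with $q^{(1)}\ne q^{(2)}$. The set $\Q$ is convex under ordinary (linear) mixtures of product measures — but here one must be a little careful, since a linear mixture $\lambda q^{(1)}+(1-\lambda)q^{(2)}$ of two product measures need not itself be a product measure. So the relevant notion of convexity is the one actually proved in Theorem \ref{the: theorem 1}(ii): I expect it is convexity along the natural "coordinatewise" interpolation, i.e. the path $t\mapsto ((1-t)q_x^{(1)}+tq_x^{(2)})\otimes((1-t)q_y^{(1)}+tq_y^{(2)})$, or along generalized geodesics, for which the endpoints do stay in $\Q$. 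Assuming $F$ is convex (and in fact strictly convex in $q_x$ for fixed $q_y$ and vice versa, since $\KL(\cdot\|\mu)$ is strictly convex in its first argument), a standard argument gives that along such a path $F$ would have to be constant, which contradicts strict convexity unless $q^{(1)}=q^{(2)}$.

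The main obstacle I anticipate is the uniqueness half, specifically reconciling the "convexity" statement of Theorem \ref{the: theorem 1}(ii) with the fact that $\Q$ is \emph{not} convex under linear mixtures. The clean route is to show strict convexity of $q_x\mapsto F(q_xq_y)$ for each fixed $q_y$ (and symmetrically), deduce that any minimizer must simultaneously solve \eqref{eq: MFVB 2} and \eqref{eq: MFVB 3}, i.e. satisfy the fixed-point relations $q_x^*\propto\exp(\E_{q_y^*}\log\pi)$ and $q_y^*\propto\exp(\E_{q_x^*}\log\pi)$, and then argue these fixed-point equations, together with the global convexity of $F$ along the coordinatewise interpolation, force a unique solution. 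I would also need to confirm that weak lower semi-continuity of $F$ plus weak compactness of $\Q$ genuinely applies to the product topology — checking that if $q^{(n)}=q_x^{(n)}q_y^{(n)}$ and $q_x^{(n)}\rightharpoonup q_x$, $q_y^{(n)}\rightharpoonup q_y$ then $q^{(n)}\rightharpoonup q_xq_y$, which holds by testing against products of bounded continuous functions and a density/Stone--Weierstrass argument on the compact product space $\Theta$.
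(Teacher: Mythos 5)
Your existence argument is essentially the paper's: compactness of $\P(\X)$ and $\P(\Y)$ via Prokhorov (tightness being automatic on compact $\X,\Y$), hence compactness of $\Q$, combined with the lower semi-continuity from Theorem \ref{the: theorem 1}(i). The paper invokes the Weierstrass theorem directly where you spell out the minimizing-sequence extraction, and your check that the weak limit of a sequence of product measures is again a product measure is a point the paper leaves implicit. This half is fine.

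For uniqueness, you have correctly put your finger on the weak point --- $\Q$ is not convex under linear mixtures, so ``convex implies unique minimizer'' does not apply off the shelf --- but your proposed repair does not close the gap. Along the coordinatewise interpolation $q^t=q_x^t\otimes q_y^t$ with $q_x^t=(1-t)q_x^{(1)}+tq_x^{(2)}$ and $q_y^t=(1-t)q_y^{(1)}+tq_y^{(2)}$, the entropy terms $F_x(q_x^t)$ and $F_y(q_y^t)$ are indeed strictly convex in $t$, but the cross term
\[
F_{xy}(q^t)=-\int\log\pi(x,y)\,q_x^t(\d x)\,q_y^t(\d y)
\]
is \emph{bilinear} in $(q_x^t,q_y^t)$ and therefore quadratic in $t$, with second derivative $-2\big(A_{11}-A_{12}-A_{21}+A_{22}\big)$ where $A_{ij}=\int\log\pi\,\d q_x^{(i)}\d q_y^{(j)}$; this can be negative, so $t\mapsto F(q^t)$ need not be convex and the ``constant on a segment contradicts strict convexity'' step fails as stated. (For what it is worth, the paper's own proof of Theorem \ref{the: theorem 1}(ii) interpolates the joint measures linearly, for which $F_{xy}$ is linear but the mixture leaves $\Q$ and the decomposition $F=F_x+F_y+F_{xy}$ no longer applies; its one-line deduction of uniqueness from convexity therefore carries essentially the same gap you identified.) Your fallback --- characterizing minimizers by the fixed-point relations \eqref{eq: optimal MFVB} --- is the right instinct, but you would still need a separate argument (e.g.\ a contraction estimate, or a log-concavity assumption on $\pi$) that the fixed point is unique; mere continuity of $\pi$ on a compact set does not obviously rule out multiple mean-field optima, as symmetric multimodal targets suggest. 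So: existence matches the paper and is sound; uniqueness is not established by your argument, and you should either add a hypothesis under which strict convexity along an admissible interpolation holds or prove uniqueness of the fixed point directly.
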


\subsection{Convergence of the particle MFVB algorithm}
As the number of particles $M\to\infty$, Algorithm \ref{Alg: particle MFVB} defines a sequence of measures $\{q^{(t)}(\d x\times\d y)=q_x^{(t)}(\d x)q_y^{(t)}(\d y),\ t=1,2,...\}$ on $\Q$.
The theorem below shows that the KL functional $F(q^{(t)})$ is non-increasing over $t$, and hence $q^{(t)}$ converges to the solution $q^*$.

\begin{theorem}\label{the: Convergence of the particle MFVB}
Assume that, for each fixed $y$, $U(x)=-\log\pi(x,y)$ is strongly convex and has a Lipschitz continuous gradient w.r.t. $x$. That is, there exist constants $c_1>0$ and $C_1>0$ such that
\[c_1I_{d_x}\leq\nabla^2U(x)\leq C_1I_{d_y},\;\;\;\text{ for all $x$}.\]
Similarly, for each fixed $x$, $V(y)=-\log\pi(x,y)$ is strongly convex and has a Lipschitz continuous gradient w.r.t. $y$. That is, there exist constants $c_2>0$ and $C_2>0$ such that
\[c_2I_{d_y}\leq\nabla^2V(y)\leq C_2I_{d_y},\;\;\;\text{ for all $y$}.\]
Then, if the step sizes $h_x$ and $h_y$ are sufficiently small and $M\to\infty$, $F(q^{(t)})$ is non-increasing over $t$, and $q^{(t)}$ converges to the unique minima $q^*$ of $F(q)$.
\end{theorem}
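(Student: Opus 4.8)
The plan is to prove the two assertions in sequence. First I would show that $F(q^{(t)})$ is non-increasing by combining the classical mean-field identity with Lemma~\ref{lem:Cheng and Bartlett}: in the limit $M\to\infty$ each half-step of Algorithm~\ref{Alg: particle MFVB} is a single Langevin Monte Carlo step toward a strongly log-concave target, so it does not increase the relevant conditional KL divergence and hence does not increase $F$. Then I would upgrade this monotone decrease to convergence $q^{(t)}\to q^*$ using the lower semicontinuity and convexity of $F$ (Theorem~\ref{the: theorem 1}), compactness of $\Q$, and uniqueness of the minimiser (Corollary~\ref{cor: unique solution}).

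\emph{Monotonicity.} For fixed $q_y\in\mathbb{W}_2(\Y)$, write $q_x^*[q_y](x)\propto\exp(\E_{q_y}[\log\pi(x,y)])$ for the block-optimal density of \eqref{eq: optimal MFVB}. Expanding the KL divergence of a product measure gives the elementary mean-field identity $F(q_x\otimes q_y)=\KL(q_x\,\|\,q_x^*[q_y])+c(q_y)$, with $c(q_y)$ independent of $q_x$, and symmetrically $F(q_x\otimes q_y)=\KL(q_y\,\|\,q_y^*[q_x])+c'(q_x)$; thus driving a Langevin flow toward $q_x^*[q_y]$ in the $x$-block is exactly a block-coordinate step of the Wasserstein gradient flow of $F$. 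Moreover $-\log q_x^*[q_y]$ inherits the strong convexity and gradient-Lipschitz bounds on $U$, since $\nabla_x^2(-\log q_x^*[q_y])(x)=\E_{q_y}[\nabla_x^2 U(x)]$ lies between $c_1 I_{d_x}$ and $C_1 I_{d_x}$; likewise $-\log q_y^*[q_x]$ with constants $c_2,C_2$. In the limit $M\to\infty$ (propagation of chaos; I would also let $m\to\infty$, or treat the $O(h_x/\sqrt m)$ subsampling error as part of the discretisation error) the update \eqref{eq: update X} is precisely one LMC step \eqref{eq: Langegin MC diffusion 2} of size $h_x$ toward $q_x^*[q_y^{(t)}]$ started from $q_x^{(t)}$. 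As that target is strongly log-concave with Lipschitz score, Lemma~\ref{lem:Cheng and Bartlett} (for $h_x$ small) gives $\KL(q_x^{(t+1)}\,\|\,q_x^*[q_y^{(t)}])\le\KL(q_x^{(t)}\,\|\,q_x^*[q_y^{(t)}])$, and adding $c(q_y^{(t)})$ yields $F(q_x^{(t+1)}\otimes q_y^{(t)})\le F(q^{(t)})$. Running the same argument on \eqref{eq: update Y} (for $h_y$ small) gives $F(q^{(t+1)})\le F(q_x^{(t+1)}\otimes q_y^{(t)})$, so $t\mapsto F(q^{(t)})$ is non-increasing.

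\emph{Convergence.} Since $F(q^{(t)})=\KL(q^{(t)}\|\pi)\ge0$ is non-increasing it converges, so $F(q^{(t)})-F(q^{(t+1)})\to0$; being the sum of the two non-negative half-step decrements, each of these tends to $0$, in particular $\KL(q_x^{(t)}\,\|\,q_x^*[q_y^{(t)}])-\KL(q_x^{(t+1)}\,\|\,q_x^*[q_y^{(t)}])\to0$ and similarly for $y$. Because $\X$ and $\Y$ are compact, $\mathcal P(\X)$ and $\mathcal P(\Y)$ are weakly compact and $W_2$ metrises weak convergence on them, so every subsequence of $\{q^{(t)}\}$ has a further subsequence along which $q_x^{(t_k)}\to q_{x,\infty}$, $q_y^{(t_k)}\to q_{y,\infty}$, hence $q^{(t_k)}\to q_\infty=q_{x,\infty}\otimes q_{y,\infty}$; by lower semicontinuity $F(q_\infty)<\infty$, so $q_\infty\in\Q$. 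Using the continuity of $q_y\mapsto q_x^*[q_y]$ and of the one-step LMC map, together with the quantitative contraction behind Lemma~\ref{lem:Cheng and Bartlett} (exponential decay of KL toward a strongly log-concave target, so that a non-vanishing distance to the target would force a non-vanishing decrement), the vanishing half-step decrements force $q_{x,\infty}=q_x^*[q_{y,\infty}]$ and $q_{y,\infty}=q_y^*[q_{x,\infty}]$. These are exactly the first-order (fixed-point) conditions for the mean-field problem \eqref{eq: MFVB 1}; by convexity of $F$ they characterise a global minimiser and by Corollary~\ref{cor: unique solution} that minimiser is unique, so $q_\infty=q^*$. Since every subsequential limit of $\{q^{(t)}\}$ equals $q^*$ in the compact metric space $\Q$, the whole sequence converges, $q^{(t)}\to q^*$ (and then lower semicontinuity forces $F(q^{(t)})\to F(q^*)$).

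\emph{Main obstacle.} Granting Lemma~\ref{lem:Cheng and Bartlett}, the monotonicity is essentially bookkeeping; the substantive step is upgrading convergence of the scalar energies $F(q^{(t)})$ to convergence of the measures $q^{(t)}$. This needs a \emph{quantitative} form of Lemma~\ref{lem:Cheng and Bartlett} in order to identify subsequential limits as mean-field fixed points, and it requires controlling the one-step discretisation bias (of order $h_x^2$ and $h_y^2$) --- exactly the quantity that can make the bare monotonicity fail in a neighbourhood of $q^*$ --- as well as the Monte-Carlo error from the finite subset size $m$, which is why the statement is phrased in the $M\to\infty$ regime. A secondary technical point is that $\Q=\mathbb{W}_2(\X)\otimes\mathbb{W}_2(\Y)$ is not a convex subset of $\mathcal P(\X\times\Y)$, so the convexity of $F$ in Theorem~\ref{the: theorem 1}(ii) and the implication ``fixed point $\Rightarrow$ global minimiser'' have to be applied in the block-wise sense appropriate to the mean-field problem, consistently with Corollary~\ref{cor: unique solution}.
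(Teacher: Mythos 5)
Your monotonicity argument is essentially identical to the paper's: the same mean-field decomposition $F(q_x\otimes q_y)=\KL(q_x\|q_x^*[q_y])+c(q_y)$ applied to each half-step, followed by Lemma~\ref{lem:Cheng and Bartlett}, giving $F(q^{(t+1)})\le F(q_x^{(t+1)}\otimes q_y^{(t)})\le F(q^{(t)})$. (You additionally verify that $-\log q_x^*[q_y]$ inherits the Hessian bounds $c_1 I_{d_x}\le \E_{q_y}[\nabla_x^2 U]\le C_1 I_{d_x}$, a hypothesis of Lemma~\ref{lem:Cheng and Bartlett} that the paper uses implicitly without checking.) Where you genuinely diverge is the final step: the paper's proof disposes of convergence in one sentence --- ``By Corollary~\ref{cor: unique solution}, $q^{(t)}$ must converge to the unique minimizer'' --- which, as stated, does not follow from monotone decrease of $F$ alone (a monotone energy sequence can stall above the minimum). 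Your subsequence argument (vanishing half-step decrements, weak compactness of $\Q$, identification of subsequential limits as mean-field fixed points $q_{x,\infty}=q_x^*[q_{y,\infty}]$, $q_{y,\infty}=q_y^*[q_{x,\infty}]$, then convexity plus uniqueness) is the standard way to make this rigorous and is a real improvement over what the paper writes. You are also right that it requires a quantitative strengthening of Lemma~\ref{lem:Cheng and Bartlett} (so that a non-vanishing $\KL$ to the block target forces a non-vanishing decrement) and control of the $O(h^2)$ discretisation bias near $q^*$; neither you nor the paper supplies these, but you at least name them explicitly, whereas the paper's proof silently assumes them. In short: same route for the substantive first half, and a more honest, more complete treatment of the second half than the paper itself provides.
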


\subsection{Posterior consistency of $q^*$}
This section studies the properties of the particle MFVB solution $q^*=q_n^*$ in the context of Bayesian inference where the target $\pi(x,y)=\pi_n(\theta)$ is a posterior distribution
\[\pi_n(\theta)=p(\theta|X^{(n)})\]
of a model parameter $\theta\in\SetR^d$, and $X^{(n)}$ denotes the data of size $n$.
Is the PMFVB approximation $q_n^*$ posterior consistent? i.e. does $q_n^*(\d\theta)$ concentrate on a small neighborhood of
the true parameter as the sample size $n\to\infty$? We look at this problem from the frequentist point of view where we assume that the true
data generating parameter exists. To study this question, we first define some notation.

Let $p_\theta^{(n)}(\d X^{(n)})$ be the distribution of data $X^{(n)}$ under the model, parameterized by $\theta\in\Theta$.
Assume that $X^{(n)}$ is generated under some true parameter $\theta_0\in\Theta$, and denote by $p_0^{(n)}(\d X^{(n)})$ the true underlying distribution of $X^{(n)}$.
Let $\pi_0(\d\theta)\in\mathcal P(\Theta)$ be the prior distribution of $\theta$. The posterior distribution is
\beqn
\pi_n(\d\theta)\propto \pi_0(\d\theta)p_\theta^{(n)}(X^{(n)}).
\eeqn
We shall study the asymptotic behavior of the PMFVB variational posterior 
\beqn
q_n^*=\arg\min_{q\in\mathcal Q}\KL\big(q\|\pi_n\big)
\eeqn
with the space of probability measures $\mathcal Q$ having the factorised form in Section \ref{sec:reviewMFVB}, $\mathcal Q=\P(\X)\otimes\P(\Y)$, $\Theta=\X\times\Y$. We note that it is unnecessary to equip $\mathcal Q$ with the Wassterstein distance in this section.

The asymptotic behavior and convergence rate of the posterior $\pi_n(\d\theta)$ are well studied in the Bayesian statistics literature; see, e.g., \cite{ghosal2000}. The asymptotic behavior of the conventional VB approximation was studied recently in \cite{ZhangGao:2019} and \cite{Alquier.Ridgway:2019},
who study the conditions on the variational family (also the prior $\pi_0$ and the likelihood $p_\theta^{(n)}$) to characterize the convergence properties of the variational posterior.
The predicament with conventional VB is that the variational family should not be too large, so that the VB optimization is solvable;
but it should not be too small either, so that the variational posterior can still achieve some sort of posterior consistency.
It turns out that the variational family $\mathcal Q$ in our particle MFVB is general enough; using the results in \cite{ZhangGao:2019}, we will show that the PMFVB approximation $q_n^*$ enjoys the posterior consistency as the posterior $\pi_n(\d\theta)$ does.

We need the following assumptions on the prior $\pi_0$ and likelihood $p_\theta^{(n)}$.
\begin{assumption} Let $\varepsilon_n$ be a sequence of positive numbers such that $\varepsilon_n\to0$ and $n\varepsilon_n^2\geq1$, and $C_1,C_2$ and $C$ are constants such that  $C>C_2+2$.
\begin{itemize}
\item[(A1)] Testing condition: For any $\varepsilon>\varepsilon_n$, there exist a set $\Theta_n(\varepsilon)\subset\Theta$ and a test function $\phi_n=\phi_n(X^{(n)})\in[0,1]$ such that
\beqn\E_{p^{(n)}_0}(\phi_n)\leq\exp(-Cn\varepsilon^2)\;\;\text{and}\;\;\sup_{\substack{\theta\in\Theta_n(\varepsilon),\\ \|\theta-\theta_0\|_2^2>C_1\varepsilon^2}}\E_{p^{(n)}_\theta}(1-\phi_n)\leq\exp(-Cn\varepsilon^2)
\eeqn
\item[(A2)] Prior mass conditions:
\beqn
\pi_0\big(\Theta_n(\varepsilon)^c\big)\leq\exp(-Cn\varepsilon^2)
\eeqn
and, for some $\rho>1$,
\beqn
\pi_0\Big(\theta:D_\rho(p^{(n)}_0\|p^{(n)}_\theta)\leq C_2n\varepsilon_n^2\Big)>0,\;\;\;\text{for any $n$}.
\eeqn
\item[(A3)] Smoothness: 
\[|\log\pi_0(\theta)-\log\pi_0(\theta')|\leq C_3\|\theta-\theta'\|_2,\;\;\;\forall\theta,\theta'\in\Theta\]
for some $C_3>0$, and
\[|\log p_{\theta}^{(n)}(X^{(n)})-\log p_{\theta'}^{(n)}(X^{(n)})|\leq C_4(X^{(n)})\|\theta-\theta'\|_2,\;\;\;\forall\theta,\theta'\in\Theta\]
with $C_5:=\E_{p^{(n)}_0}\big[C_4(X^{(n)})\big]<\infty$. 
\end{itemize}
\end{assumption}
Here, $D_\rho(p\|q)$ denotes the $\rho$-R\'enyi divergence between two probability measures $p$ and $q$,
\beqn
D_\rho(p\|q) = \begin{cases}
\frac{1}{\rho-1}\log\int\left(\frac{\d p}{\d q}\right)^{\rho}\d q,&\rho\not=1,\\
\KL(p\|q),&\rho=1.
\end{cases}
\eeqn

Assumptions (A1) and (A2) are standard in the Bayesian statistics literature \citep{ghosal2000,ZhangGao:2019}, and are used to characterize the convergence rate of the posterior distribution.

Assumption (A1) states that, restricted to a subset $\Theta_n(\varepsilon)$ of $\Theta$, there exists a test function
that is able to distinguish the true probability measure from the complement of its neighborhood.
Assumption (A2) requires that the prior distribution concentrates on $\Theta_n(\varepsilon)$ and
puts a positive mass on ``good'' values of $\theta$ in the sense that $p^{(n)}_\theta$ is close enough to the true measure $p^{(n)}_0$ in terms of the $\rho$-R\'enyi divergence.
Under these assumptions, one can prove that the convergence rate of the posterior $\pi_n(\d\theta)$ to the true data generating measure is $\varepsilon_n^2$ \citep{ghosal2000,ZhangGao:2019}. Typically, $\varepsilon_n=1/\sqrt{n}$.  
Assumption (A3) requires some smoothness of the prior and likelihood with respect to $\theta$.
The theorem below shows that $q_n^*$ is posterior consistent.

\begin{theorem}\label{the: posterior consistency}
Suppose that conditions (A1), (A2) and (A3) are satisfied.
Then, for any $\epsilon>0$,
\beq
q_n^*\Big(\|\theta-\theta_0\|_2^2>\epsilon\Big)=o(1)\stackrel{n\to\infty}{\longrightarrow}0,\;\;\;\;p^{(n)}_0-a.s. 
\eeq
\end{theorem}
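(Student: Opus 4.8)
The plan is to mimic the classical testing/prior‑mass proof of posterior consistency for the true posterior $\pi_n$ and then to transfer the conclusion to the variational posterior $q_n^*$, using the fact that $\KL(q_n^*\|\pi_n)$ is negligible compared with $n$ — which is the only place where the mean‑field restriction and the smoothness in (A3) really enter. Throughout, fix $\epsilon>0$ and set $\varepsilon:=\sqrt{\epsilon/C_1}$, a \emph{constant}, so that $\{\|\theta-\theta_0\|_2^2>\epsilon\}=\{\|\theta-\theta_0\|_2^2>C_1\varepsilon^2\}$ and every bound of the form $e^{-cn\varepsilon^2}$ below is summable in $n$, which is what is needed for the $p^{(n)}_0$‑a.s.\ conclusion via Borel–Cantelli; note also that we may assume $n\varepsilon_n^2\to\infty$, since replacing $\varepsilon_n$ by $\max(\varepsilon_n,n^{-1/3})$ preserves (A1)--(A3).

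First I would establish that $\KL(q_n^*\|\pi_n)=O(\log n)$, $p^{(n)}_0$‑a.s. Writing $\ell_n(\theta)=\log p^{(n)}_\theta(X^{(n)})$ and using $\ell_n(\theta_0)=\log p^{(n)}_0(X^{(n)})$, one has the identity $\KL(q\|\pi_n)=-\E_q[\ell_n(\theta)-\ell_n(\theta_0)]+\KL(q\|\pi_0)+\log\big(p^{(n)}(X^{(n)})/p^{(n)}_{\theta_0}(X^{(n)})\big)$, with $p^{(n)}(X^{(n)})=\int p^{(n)}_\theta(X^{(n)})\pi_0(\d\theta)$ the evidence; the last term is the log of a nonnegative mean‑one martingale in $n$ (for sequentially observed data) and hence $O(1)$ a.s. Since $q_n^*$ minimises $\KL(\cdot\|\pi_n)$ over the product family $\mathcal Q=\P(\X)\otimes\P(\Y)$, it suffices to evaluate the first two terms at one product measure $\bar q\in\mathcal Q$: take $\bar q$ to be the product of the uniform distributions on the balls of radius $\delta_n=1/n$ around the $\X$‑ and $\Y$‑blocks of $\theta_0$. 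By (A3), $\log\pi_0$ is Lipschitz (hence $\pi_0$ is bounded away from $0$ and $\infty$ on the compact $\Theta$) and $\log p^{(n)}_\theta$ is Lipschitz with constant $C_4(X^{(n)})$, so $-\E_{\bar q}[\ell_n(\theta)-\ell_n(\theta_0)]\le C_4(X^{(n)})\delta_n$ and $\KL(\bar q\|\pi_0)\le -H(\bar q)-\log\pi_0(\theta_0)+C_3\delta_n=O(d\log n)$; since $\E_{p^{(n)}_0}C_4(X^{(n)})=C_5<\infty$, Borel–Cantelli gives $C_4(X^{(n)})\delta_n\to0$ a.s., and thus $\KL(q_n^*\|\pi_n)=O(d\log n)=o(n)$ a.s.

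Next I would split $q_n^*(\|\theta-\theta_0\|_2^2>\epsilon)\le q_n^*\big(\Theta_n(\varepsilon)^c\big)+q_n^*\big(\Theta_n(\varepsilon)\cap\{\|\theta-\theta_0\|_2^2>C_1\varepsilon^2\}\big)$ and invoke the data‑processing inequality for $\KL$ applied to the partition $\{A,A^c\}$, which gives $q_n^*(A)\le(\KL(q_n^*\|\pi_n)+1)/\log(1/\pi_n(A))$ for any event $A$. Hence it is enough to bound both $\pi_n(\Theta_n(\varepsilon)^c)$ and $\pi_n$ of the intersection by $e^{-c'n\varepsilon^2}$ on an event whose complement has summable probability: then $\log(1/\pi_n(A))$ is of order $n$, which dominates $\KL(q_n^*\|\pi_n)=O(\log n)$, so $q_n^*(A)=O(\log n/n)\to0$ a.s. These two bounds on $\pi_n$ are exactly the Ghosal–Ghosh–van der Vaart argument \citep{ghosal2000} specialised to the Rényi‑divergence assumptions here: the second part of (A2) (with $\rho>1$) yields, on an event with summable complement, a lower bound $p^{(n)}(X^{(n)})/p^{(n)}_0(X^{(n)})\ge\pi_0\big(\{D_\rho(p^{(n)}_0\|p^{(n)}_\theta)\le C_2n\varepsilon_n^2\}\big)e^{-c_0n\varepsilon_n^2}$ on the evidence; for the intersection one writes $\E_{p^{(n)}_0}[\pi_n(\cdot)]\le\E_{p^{(n)}_0}\phi_n+(\text{evidence lower bound})^{-1}\int\E_{p^{(n)}_\theta}(1-\phi_n)\pi_0(\d\theta)$, with the integral restricted to $\Theta_n(\varepsilon)\cap\{\|\theta-\theta_0\|_2^2>C_1\varepsilon^2\}$, and (A1) together with $C>C_2+2$ makes the right‑hand side of order $e^{-(C-c_0)n\varepsilon^2}$; for $\Theta_n(\varepsilon)^c$ one argues identically using the first part of (A2). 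Markov's inequality turns these into summable tail bounds on $\pi_n(A)$; combined with the (summable) exceptional events from the previous step and Borel–Cantelli, we conclude $q_n^*(\|\theta-\theta_0\|_2^2>\epsilon)\to0$, $p^{(n)}_0$‑a.s.

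The main obstacle will be the second step. Because the variational class is only the mean‑field family, I must certify that \emph{some product} measure already matches $\pi_n$ in $\KL$ up to a negligible amount, and the delicate point is that a $\delta_n$‑concentrated product pays an entropy penalty $-H(\bar q)\sim d\log(1/\delta_n)$; this is why one obtains only $\KL(q_n^*\|\pi_n)=O(\log n)$ rather than $O(1)$, and why the transfer must be compared against the \emph{linear‑in‑$n$} rate $\log(1/\pi_n(A))$ at which the true posterior escapes a fixed neighbourhood of $\theta_0$. This is precisely where Assumption (A3) is needed — (A1)--(A2) alone control $\pi_n$, not $q_n^*$ — and where the bookkeeping that upgrades the in‑probability bounds to the stated $p^{(n)}_0$‑a.s.\ statement has to be done carefully; that bookkeeping is the part we would import from \cite{ZhangGao:2019}.
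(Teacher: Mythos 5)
Your proposal is correct in outline but takes a genuinely different route from the paper. The paper outsources the transfer from $\pi_n$ to $q_n^*$ entirely to Theorem~1 of \cite{ZhangGao:2019}, which under (A1)--(A2) gives $\E_{p_0^{(n)}}\big[\E_{q_n^*}\|\theta-\theta_0\|_2^2\big]=O(\varepsilon_n^2+\gamma_n^2)$ with $\gamma_n^2=\frac1n\min_{q\in\Q}\E_{p_0^{(n)}}\KL(q\|\pi_n)$; the only work done in the paper is to bound $\gamma_n^2=o(1)$ by testing against the product measure $N(\theta_0,\tfrac1n I_d)$ and invoking (A3), followed by Markov's inequality. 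You instead re-derive the transfer from first principles: an a.s.\ bound $\KL(q_n^*\|\pi_n)=O(\log n)$ via a shrinking product test measure, the Ghosal--Ghosh--van der Vaart testing argument to get $\pi_n(A)\le e^{-c'n\varepsilon^2}$ for the fixed-radius complement, and the change-of-measure inequality $q(A)\le(\KL(q\|\pi_n)+1)/\log(1/\pi_n(A))$. Both routes use (A3) for exactly the same purpose (certifying that some product measure concentrated at $\theta_0$ is KL-close to the posterior up to an $o(n)$ entropy cost). What your route buys is an honest path to the almost-sure statement via Borel--Cantelli; the paper's argument only delivers convergence of $\E_{p_0^{(n)}}[q_n^*(\|\theta-\theta_0\|_2^2>\epsilon)]$ to zero, i.e.\ convergence in probability, and the final ``$p_0^{(n)}$-a.s.'' assertion is not actually justified there. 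What the paper's route buys is brevity and a quantitative rate $\varepsilon_n^2+\gamma_n^2$ rather than just consistency.

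Two technical slips you should repair. First, your Borel--Cantelli step for $C_4(X^{(n)})\delta_n\to0$ a.s.\ does not follow from $\E_{p_0^{(n)}}C_4(X^{(n)})=C_5<\infty$ with $\delta_n=1/n$: Markov gives $P(C_4(X^{(n)})/n>\delta)\le C_5/(n\delta)$, which is not summable. Take $\delta_n=n^{-2}$ (the entropy penalty is still $O(d\log n)$) or impose additional structure on $C_4$. Second, the second prior-mass condition in (A2) as stated is only a positivity condition, $\pi_0(\{D_\rho\le C_2 n\varepsilon_n^2\})>0$, so your evidence lower bound $p^{(n)}/p_0^{(n)}\ge\pi_0(\cdot)e^{-c_0 n\varepsilon_n^2}$ carries a prior-mass factor that could in principle decay with $n$; you need it to be at worst $e^{-O(n\varepsilon_n^2)}$ for the comparison against $e^{-Cn\varepsilon^2}$ (with $\varepsilon$ fixed) to go through. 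This is arguably a defect of the assumption as stated rather than of your argument, but your proof leans on the quantitative version while the paper hides it inside the cited theorem.
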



\section{Numerical examples}\label{sec: Numerical examples}

\subsection{Bayesian logistic regression}
Although Bayesian logistic regression is a benchmark model in statistics, is not straightforward to use the classical MFVB method here because of the lack of a conjugate prior.
This section demonstrates that it is straightforward to use the PMFVB method for Bayesian logistic regression. 
Consider the model
\beq\label{eq:BLR 1}
\beta\sim N(0,\sigma_0^2I_d),\;\;\;y_i\sim\text{Binomial}\big(1,\sigma(x_i^\top\beta)\big),\;\;\;
\sigma(x_i^\top\beta)=\frac{1}{1 +\exp(-x_i^\top\beta)},
\eeq
where $\beta=(\beta_0,\beta_1,\beta_2,\beta_3)^\top$ and $x_i=(1,x_{i,1},x_{i,2},x_{i,3})^\top$.
We generated a dataset of size $n=200$ from \eqref{eq:BLR 1} with $d=4$ and $\sigma_0^2=4$.
The likelihood is 
\[L(\beta)=\prod_{i=1}^n\big(\sigma(x_i^\top\beta)\big)^{y_i}\big(1-\big(\sigma(x_i^\top\beta)\big)\big)^{1-y_i}\]
with posterior 
\[\pi(\beta)\propto p(\beta)L(\beta).\]
We consider the PMFVB procedure with the model parameters $\beta$ factorized into two blocks $\theta_1=(\beta_0,\beta_1)^\top$ and $\theta_2=(\beta_2,\beta_3)^\top$. 
Write $x_i^{(1)}=(1,x_{i,1})^\top$ and $x_i^{(2)}=(x_{i,2},x_{i,3})^\top$.
The gradient of the log posterior with respect to $\theta_1$ and $\theta_2$ is given by
\beqn
\nabla_{\theta_1}\log\pi(\theta_1,\theta_2)=\sum_{i = 1}^n \left(y_i - \sigma\left({x_i^{(1)}}^\top\theta_1+{x_i^{(2)}}^\top\theta_2\right)\right)x_{i}^{(1)} - \frac{1}{\sigma_0^2}\theta_1
\eeqn
and
\beqn
\nabla_{\theta_2}\log\pi(\theta_1,\theta_2)=\sum_{i = 1}^n \left(y_i - \sigma\left({x_i^{(1)}}^\top\theta_1+{x_i^{(2)}}^\top\theta_2\right)\right)x_{i}^{(2)} - \frac{1}{\sigma_0^2}\theta_2,
\eeqn
respectively.

The PMFVB algorithm maintains a set of $M$ particles $\{\theta_{1,i}^{(t)},\theta_{2,i}^{(t)}\}_{i=1,...,M}$ over iterations $t\geq 0$. At iteration $t+1$, according to \eqref{eq: update X}, the first block of the particles is updated as
\beq\label{eq: PMFVB for BLR 1}
\theta_{1,i}^{(t+1)}=\theta_{1,i}^{(t)}+\frac{h}{2m}\sum_{k=1}^m\nabla_{\theta_1}\log\pi(\theta_{1,i}^{(t)},\theta_{2,i_k}^{(t)})+\sqrt{h}N(0,I_2),
\eeq
where $\{\theta_{2,i_k}^{(t)}\}_{k=1,...,m}$ is a random subset of size $m$ from $\{\theta_{2,i}^{(t)}\}_{i=1,...,M}$, and $h>0$ is a step size.
The second block of particles is updated as
\beq\label{eq: PMFVB for BLR 2}
\theta_{2,i}^{(t+1)}=\theta_{2,i}^{(t)}+\frac{h}{2m}\sum_{k=1}^m\nabla_{\theta_2}\log\pi(\theta_{1,i_k}^{(t+1)},\theta_{2,i}^{(t)})+\sqrt{h}N(0,I_2),
\eeq
where $\{\theta_{1,i_k}^{(t+1)}\}_{k=1,...,m}$ is a random subset from $\{\theta_{1,i}^{(t+1)}\}_{i=1,...,M}$.
The PMFVB procedure for approximating the posterior $\pi(\beta)$ iterates between \eqref{eq: PMFVB for BLR 1} and \eqref{eq: PMFVB for BLR 2} until stopping.

Below, we implemented the PMFVB algorithm using $M = 3000$ particles, and the optimisation time took $18$ seconds. In comparison, the MCMC (Halmitonian Monte Carlo) was conducted using PyMC with standard setup, 10,000 samples and 1,000 burn-in period, and the sampling procedure took 54 seconds. Both algorithms  were run on the same Dell Optiplex 7490 AIO (i7-11700) computer. The trace plot in Figure \ref{fig:bayeslasso} shows the lower bound \eqref{eq: LB estimate} over the iterations. 
\begin{figure}[h]
    \centering
    \includegraphics[width = 0.6\textwidth]{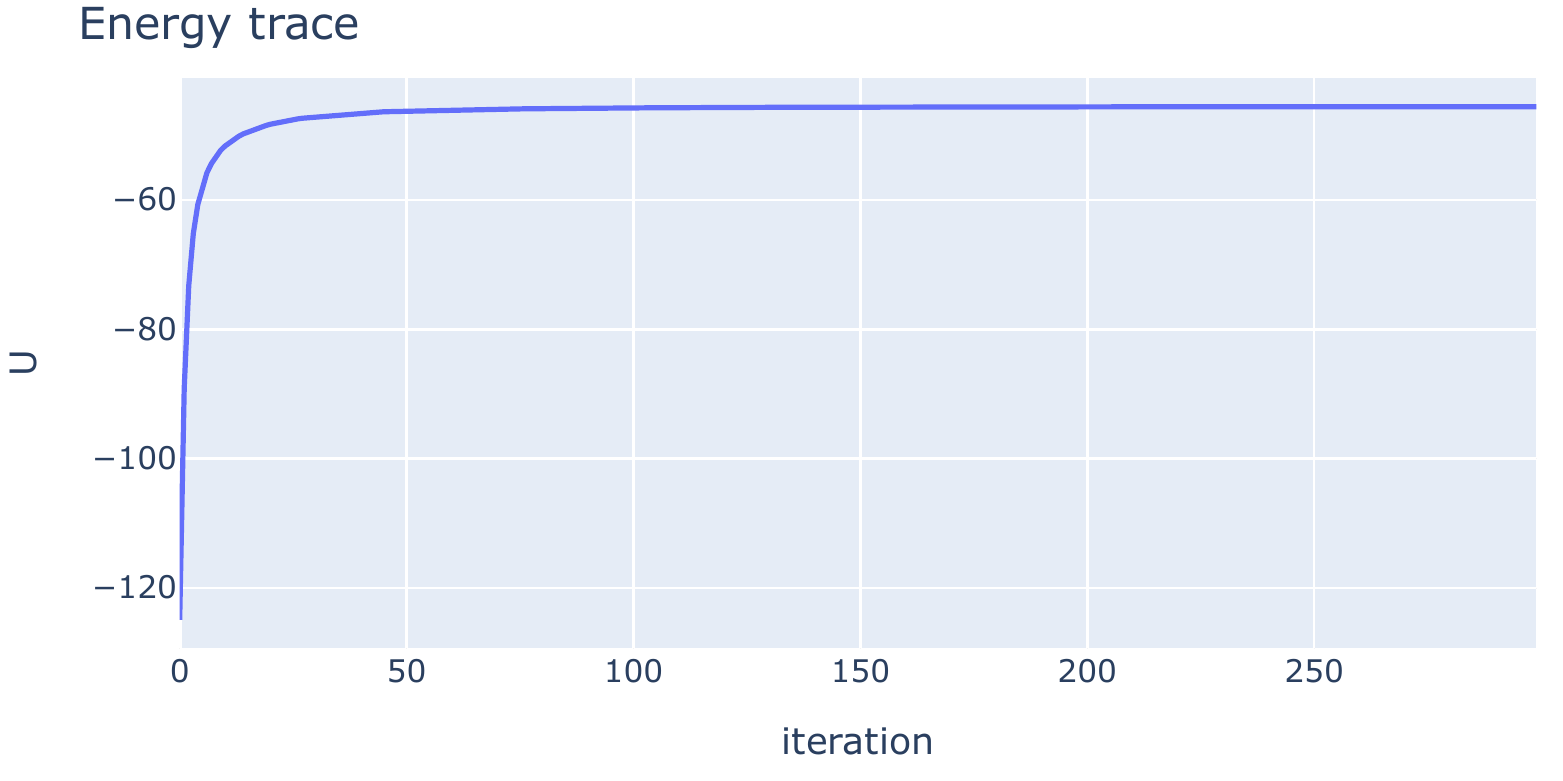}
    \caption{The trace plot of the PMFVB lower bound in Bayesian logistic regression.}
    \label{fig:bayeslasso}
\end{figure}
Figure \ref{fig:blogistic_hist} plots the marginal posteriors using kernel density estimation based on the PMFVB particles and Hamiltonian MC, and shows similar results. 
\begin{figure}[H]
\centering
\begin{subfigure}{.5\textwidth}
  \centering
  \includegraphics[width=\linewidth]{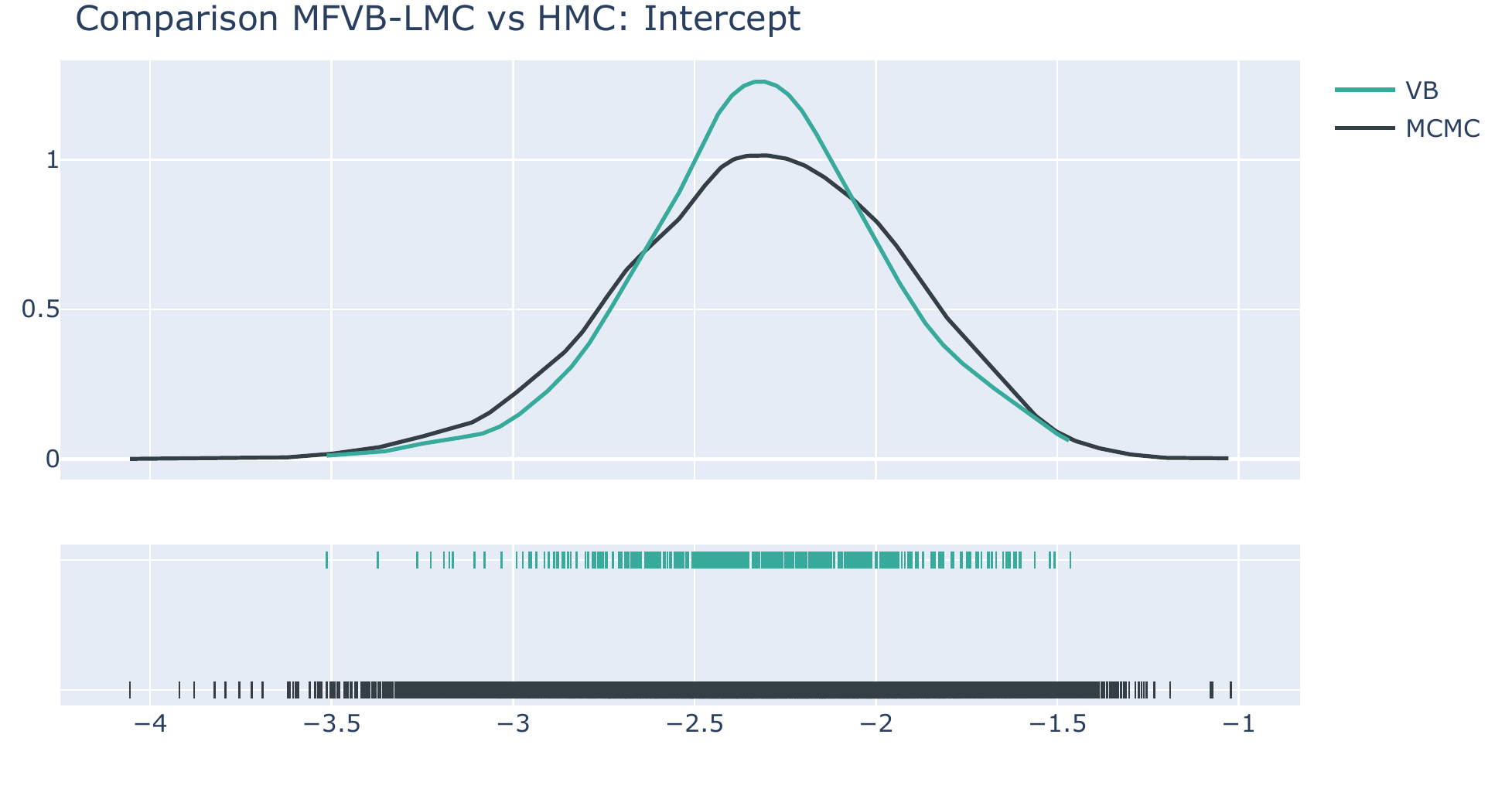}
  \caption{$\beta_0$}
\end{subfigure}%
\begin{subfigure}{.5\textwidth}
  \centering
  \includegraphics[width=\linewidth]{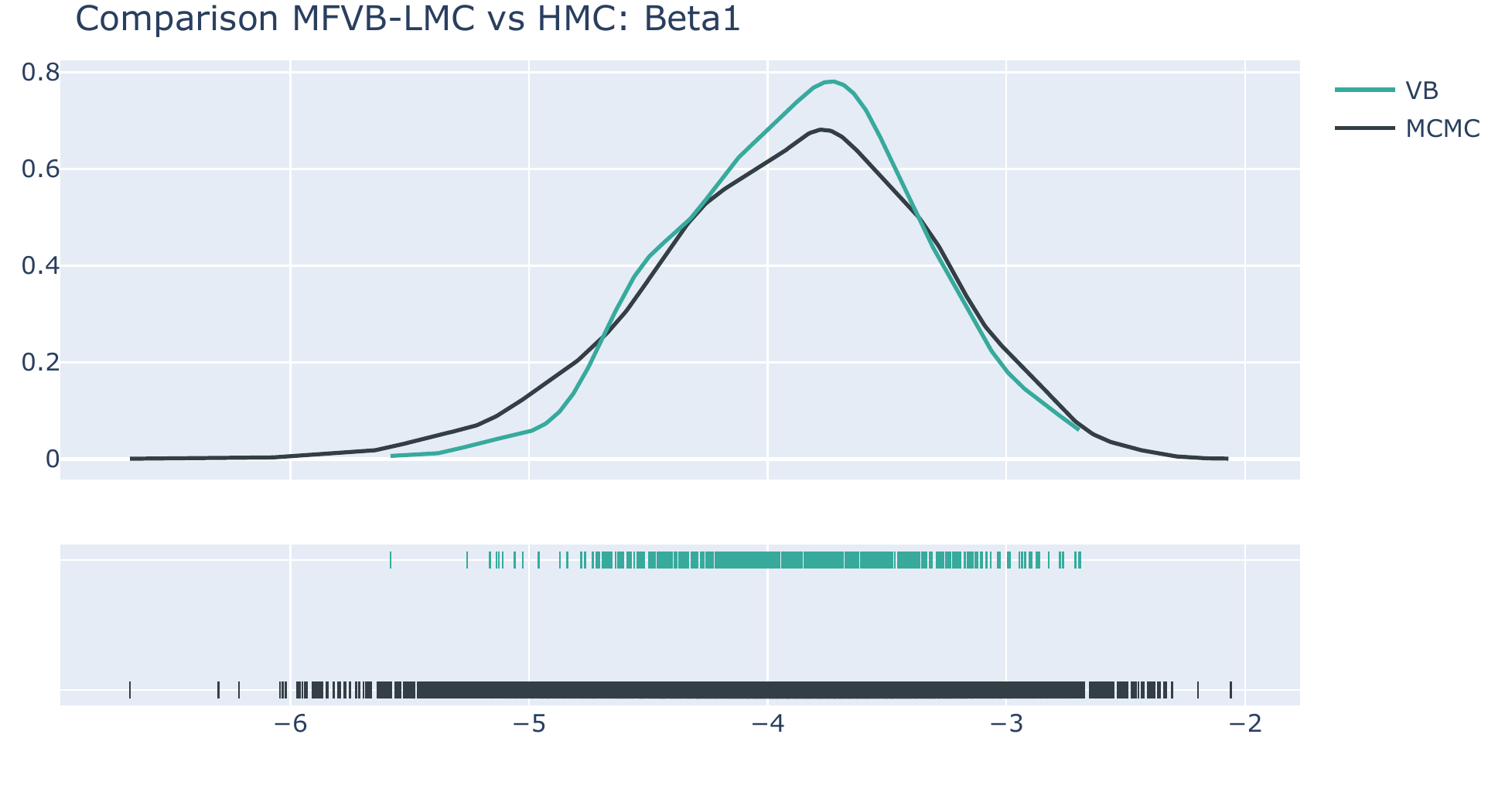}
  \caption{$\beta_1$}
\end{subfigure}
\begin{subfigure}{.5\textwidth}
  \centering
  \includegraphics[width=\linewidth]{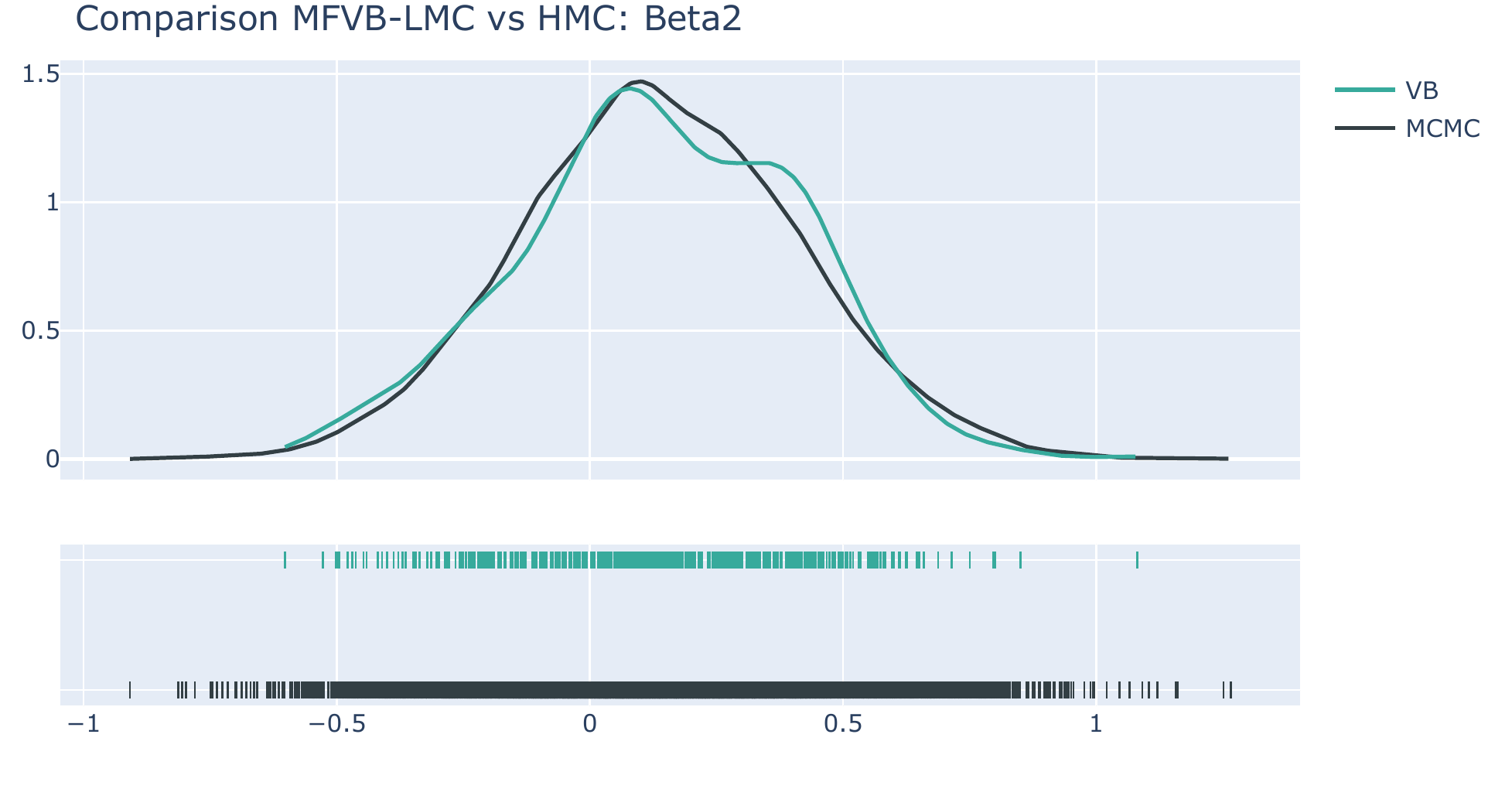}
  \caption{$\beta_2$}
\end{subfigure}%
\begin{subfigure}{.5\textwidth}
  \centering
  \includegraphics[width=\linewidth]{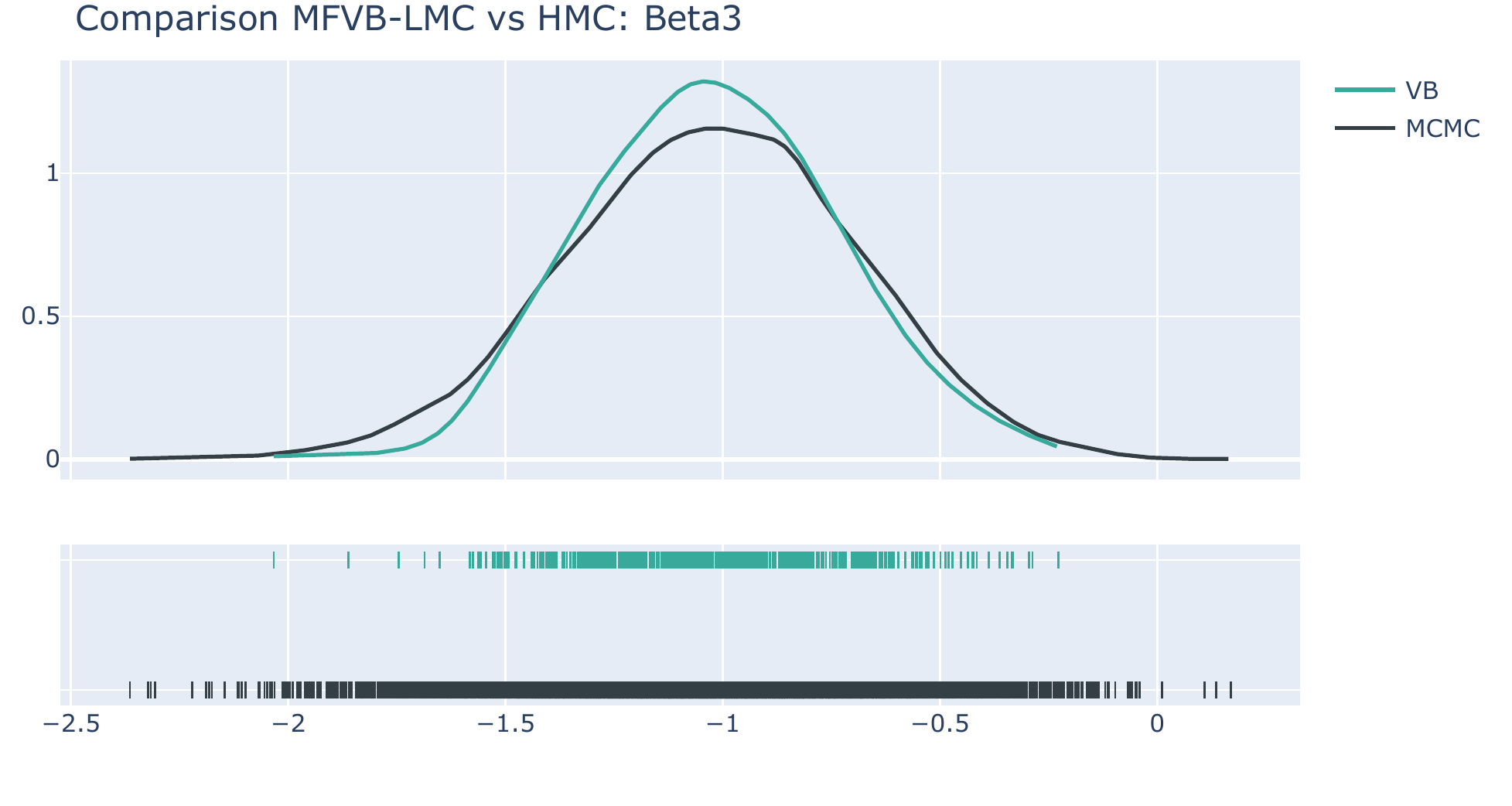}
  \caption{$\beta_3$}
\end{subfigure}
\caption{The estimated posterior density curves and rug plots of Bayesian logistic regression: the green is of the PMFVB and black of HMC (generated using PyMC).}
\label{fig:blogistic_hist}
\end{figure}

\subsection{Stochastic volatility}
This section applies the PMFVB method to Bayesian inference in the Stochastic Volatility (SV) model \citep{Taylor:SV1982}.
Let $\{y_t,\ t=1,2,...\}$ be an asset return time series.
The SV model is
\bea
y_{t}&\sim& N\big(0,e^{x_t}\big),\;\;t=1,2,...,T\label{eq:SV1}\\
x_{t}&\sim&N\big(\mu(1-\phi)+\phi x_{t-1},\sigma^2\big),\;\;t=2,...,T,\;\;x_1\sim N\big(\mu,\sigma^2/(1-\phi^2)\big),\label{eq:SV2}
\eea
with $\mu\in\SetR,\phi\in(-1,1)$ and $\sigma^2>0$ being the model parameters. Write $\theta=(\mu,\phi,\sigma^2)$.
Following \cite{kim1998stochastic}, we use the prior $\mu\sim N(0,\sigma_0^2)$ with $\sigma_0^2=10$, $\tau=(1+\phi)/2\sim\text{Beta}(a_0,b_0)$ with $a_0=20$ and $b_0=1.5$, and $\sigma^2\sim\text{inverse-Gamma}(\alpha_0,\beta_0)$ with $\alpha_0=2.5$ and $\beta_0=0.025$. 
It is challenging to perform Bayesian inference for the SV model because the likelihood $p(y|\theta)$ is a high-dimensional integral over the latent variables $x=x_{1:T}$.
A number of Bayesian methods are available for estimating the SV model including SMC$^2$ \citep{Chopin:JRSSB2013_SMC2,Gunawan:ES2022} and fixed-form Variational Bayes \citep{Tran:JCGS2019}.
However, to the best of our knowledge, MFVB has never been successfully used for this model.

To apply the PMFVB for SV, we use the following factorized variational distribution 
\beq\label{eq:SV PMFVB factorization}
q(\theta,x)=q(\mu)q(\sigma^2)q(\phi,x).
\eeq
This factorization leads to an analytical update for $q(\mu)$ and $q(\sigma^2)$, and we only need one LMC procedure to update $q(\phi,x)$, see Appendix A for the derivation.
One could also use
\beq
q(\theta,x)=q(\theta)q(x),
\eeq
but then two LMC procedures are needed to update $q(\theta)$ and $q(x)$ as $q(\theta)$ cannot be updated analytically.

We generate a return time series of $T=500$ observations from the SV model \eqref{eq:SV1}-\eqref{eq:SV2} with $\mu=1$, $\phi=0.8$ and $\sigma=0.5$.
Figure \ref{fig:SV} plots the posterior densities for $\theta$ estimated by the PMFVB and SMC methods. We use 500 particles in both methods. 
The CPU times taken by PMFVB and SMC were  8.2 and 29.3 minutes, respectively.
The running time for SMC depends on many factors such as the number of particles used in the particle filter for estimating the likelihood and the number of Markov moves. We select these numbers following their typical use in the literature; see, e.g., \cite{Gunawan:ES2022}.
Figure \ref{fig:SV} shows that the PMFVB estimates for $\mu$ and $\phi$ are almost identical to that of SMC except the estimate of $\sigma^2$, where PMFVB underestimates the posterior variance.
This is the well-known problem for the MFVB method due to the factorization \eqref{eq:SV PMFVB factorization} it imposes on the variational distribution. Section \ref{sec: discussion} suggests a possible solution.

\begin{figure}[H]
\centering
  \includegraphics[width=.7\linewidth]{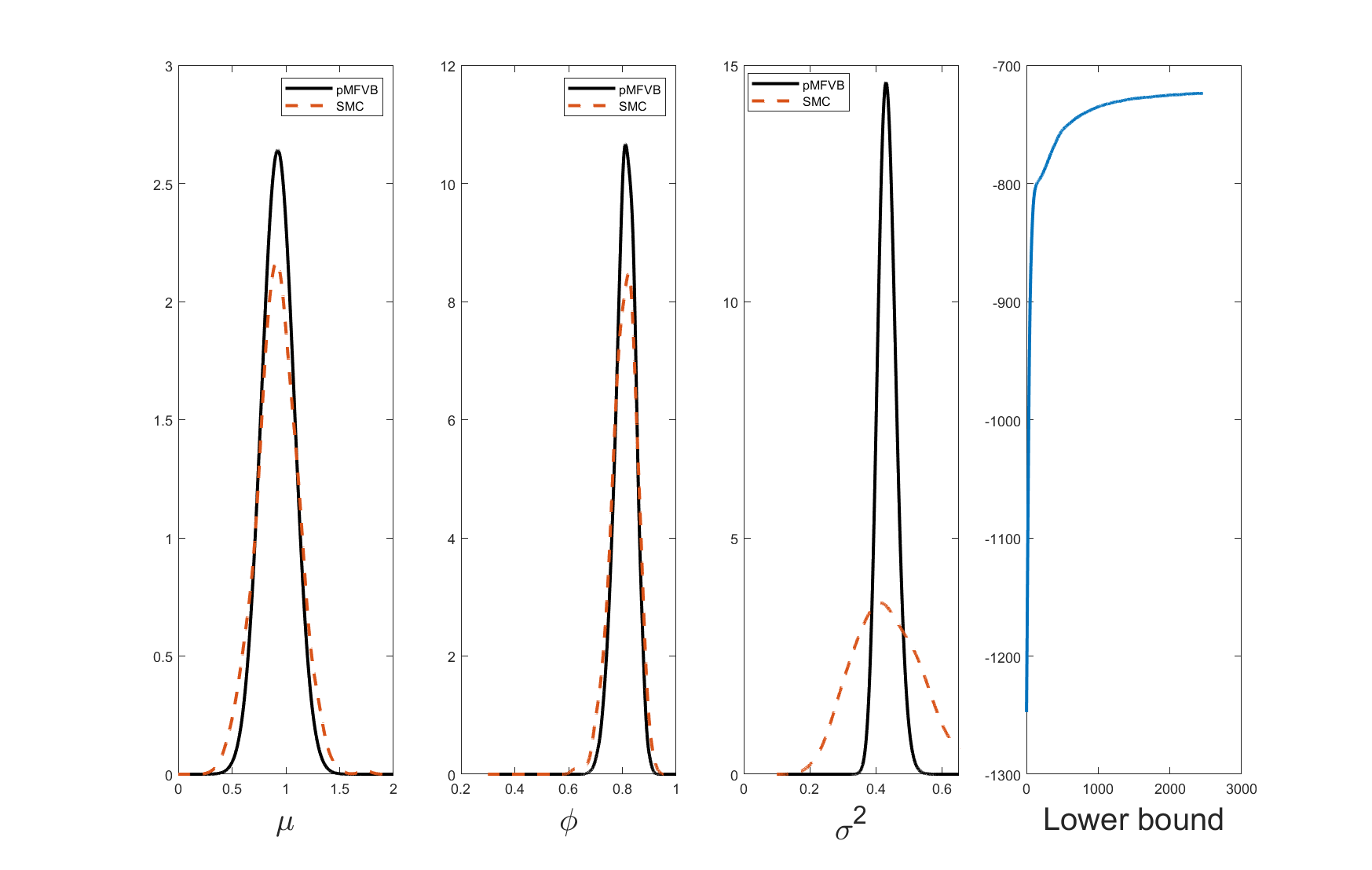}
\caption{The posterior density curves for the SV model parameters estimated by PMFVB (solid line) and SMC (dashed line). The right panel shows the smoothed lower bound in PMFVB over a window of size 50.}\label{fig:SV}
\end{figure}

\section{Particle MFVB for Bayesian neural networks }\label{sec: PMFVB-NN}
This section presents a variant of the PMFVB approach for Bayesian inference in big models like deep neural networks.
One of the most commonly used method for Bayesian inference in deep learning is perhaps the Stochastic Gradient Langevin Dynamics (SGLD) method of 
\cite{Welling.Teh:2011} and its variants.
Let $\theta\in\SetR^d$ be the model parameters and $\pi(\d\theta)$ their posterior distribution.
The SGLD algorithm is based on the discretised Langevin diffusion \eqref{eq: Langegin diffusion}
\beq
\theta^{(t+1)}=\theta^{(t)}+\frac{h}{2}\wh{\nabla\log\pi\big(\theta^{(t)}\big)}+\sqrt{h}\eta_t,\;\;\;\eta_t\stackrel{i.i.d}{\sim}N(0,I_d),
\eeq
where $\wh{\nabla\log\pi(\theta)}$ is an unbiased estimator of the gradient $\nabla\log\pi(\theta)$, often computed from a data mini-batch.
There is a large literature aiming at improving SGLD by exploiting the curvature structure of the log-target density function. For example, \cite{Li.et.al:2016} propose a preconditioned Stochastic Gradient Langevin Dynamics (pSGLD) that rescales the gradient of the log target density by a diagonal matrix learnt using the second moments of the previous gradients.
\cite{Kim.et.al:2022} introduce several SGLD schemes that add an adaptive drift to the noisy log-gradient estimator. 
See also \cite{Girolami:JRSSB2011} and \cite{Chen:ICML2014}.

\subsection{The algorithm}\label{sec: algorithm PMFVB-NN}
We introduce three refinements to make the PMFVB approach computationally efficient in big-data and complex-model situations.

First, we choose the updating block  randomly in each iteration $t$ and for each particle $i$. 
Let $\iota=\{1\leq j_1<j_2<...j_m\leq d\}$ be an index subset of size $m$ from $\{1,2,...,d\}$. We denote by $\theta(\iota)=(\theta_{j_1},...,\theta_{j_m})$ the sub-vector obtained from $\theta$ corresponding to the index set $\iota$; and by $\theta(\setminus\iota)$ the vector from $\theta$ after removing the components in $\theta(\iota)$.
For iteration $t$ and for each particle $i$, the index set $\iota_i$ is randomly selected (we suppress the dependence on $t$ for notational simplicity); the corresponding block of $m$ components $\theta_i^{(t)}(\iota_i)$ of $\theta_i^{(t)}$ is updated via LMC
\beq\label{eq:LMC components}
\theta_i^{(t+1)}(\iota_i)=\theta_i^{(t)}(\iota_i)+\frac{h}{2}\E_{q_{\theta(\setminus\iota)}^{(t)}}\Big[\nabla_{\theta(\iota_i)}\log\pi\big(\theta_i^{(t)}(\iota_i),\theta(\setminus\iota_i)\big)\Big]+\sqrt{h}\eta_{i},
\eeq
with $\eta_{i}\sim N_{m}(0,I)$. Here, $q_{\theta(\setminus\iota_i)}^{(t)}$ denotes the marginal distribution of the particles $\theta_i^{(t)}$ w.r.t. the $\theta(\setminus\iota_i)$ components.

Second, we approximate the expectation term $\E_{q_{\theta(\setminus\iota_i)}^{(t)}}[\cdot]$ in \eqref{eq:LMC components} by $\nabla_{\theta(\iota_i)}\log\pi\big(\theta_i^{(t)}(\iota_i),\bar\theta^{(t)}(\setminus\iota_i)\big)$,
where $\bar\theta^{(t)}$ is the sample mean of the particles $\{\theta_i^{(t)},i=1,...,M\}$. This leads to a significant reduction in computational time, compared to an alternative that averages over a subset of particles as in \eqref{eq: update X}.

Third, it is desirable to incorporate an adaptive SGLD scheme into the LMC update in PMFVB. Our work uses the ADAM-based adaptive-drift SGLD scheme of \cite{Kim.et.al:2022}. Algorithm \ref{Alg: particle MFVB-NN} summarises the method.

\begin{algorithm}[H]
\caption{Particle MFVB for neural networks}\label{Alg: particle MFVB-NN}
\begin{algorithmic}[1]
\Procedure{PMFVB-NN}{$h$, $\beta_1, \beta_2, a, \lambda, \epsilon$}
\State Input: step size $h$, smoothing weights $\beta_1, \beta_2$, tolerance $\epsilon>0$, scale factor $a$ and $\lambda>0$.
\State Initialise particles $\theta_i^{(0)} \sim p_0, ~i=1,...,M$, $m_0=0$, $v_0=0$, $t=0$.
\While{ \text{Not} $S(\epsilon)$}
\State $g_t \leftarrow m_t\oslash\sqrt{v_t+\lambda}$ and $\bar\theta^{(t)} \leftarrow \frac1M\sum_{i=1}^M\theta^{(t)}_i$ 
\For{$i$ in $[M]$}
\State  $\iota_i \leftarrow \{1\leq j_1<j_2<...j_m\leq d\}$, $j_i\in\{1,...,d\}$ are randomly selected.
\State Update: $\tilde{g}_t := \Big(\nabla_{\theta(\iota_i)}\log\pi\big(\theta_i^{(t)}(\iota_i),\bar\theta^{(t)}(\setminus\iota_i)\big)+ag_t(\iota_i)\Big)$
\beq
\begin{aligned}
\theta_i^{(t+1)}(\iota_i) &\leftarrow \theta_i^{(t)}(\iota_i) +\frac{h}{2}\tilde{g}_t + \sqrt{h}\eta_{i},~ \eta_{i}\sim N_{m}(0,I) \\ 
\theta_i^{(t+1)} & \leftarrow\big(\theta_i^{(t+1)}(\iota_i),\theta_i^{(t)}(\setminus\iota_i)\big)
\end{aligned}
\eeq
\State $g^{(t)}_i \leftarrow \nabla_{\theta}\log\pi(\theta_i^{(t+1)})$, new gradient
\State Update the adaptive drift
\bean
\bar g_t \leftarrow \frac1M\sum_{i=1}^Mg^{(t)}_i, &~~
\bar v_t \leftarrow \sqrt{\frac{1}{M}\sum_i(g^{(t)}_i-\bar g_t)\odot(g^{(t)}_i-\bar g_t)}\\
m_{t+1} \leftarrow \beta_1m_t+(1-\beta_1)\bar g_t, &~~
v_{t+1}\leftarrow\beta_2v_t+(1-\beta_2)\bar v_t.
\eean
\EndFor
\State $t=t+1$.
\EndWhile
\EndProcedure
\end{algorithmic}
\end{algorithm}

In Algorithm \ref{Alg: particle MFVB-NN}, $\oslash$ and $\odot$ denote the component-wise division and multiplication operators, respectively.

The implementation below follows \cite{Kim.et.al:2022} and sets $\beta_1=0.9$, $\beta_2=0.99$, $a=100$ and $\lambda=10^{-8}$. The block size $m$ is 10\% of the length $d$ of $\theta$. The step size $h$ is typically $0.001$.

\subsection{Applications}\label{sec: application PMFVB-NN}
\subsubsection{Bayesian neural networks for regression}
Consider the neural network regression model
\beq
y_i = \eta(x_i,w)+\epsilon_i,\;\;\epsilon_i\sim N(0,\sigma^2),\;\;i=1,...,n,
\eeq
where $\eta(x_i,w)$ is the output from a neural network with the input $x_i$ and weights $w=\{w_j,j=1,...,d_w\}$.
Neural networks are prone to overfitting and the Bayesian framework provides a principled  way for dealing with this problem by placing a regularization prior on the weights. We consider the following adaptive ridge-type regularization prior on the weights
\begin{equation}\label{eq:regpriorweights}
\begin{aligned}
w_j|\tau_j&\sim&N(0,\tau_j),\;\;j=1,...,d_w,\\
\tau_j&\sim&\text{Inverse Gamma}(\alpha_0,\beta_0).
\end{aligned}
\end{equation}
We set $\alpha_0=1$ and $\beta_0=0.01$ in all the examples below. For the variance $\sigma^2$, we use the improper prior $p(\sigma^2)\sim 1/\sigma^2$ \citep{Park:2008}.
The model parameters include $\theta=\big(w,\tau=\{\tau_j,j=1,...,d_w\},\sigma^2\big)$. Given the different roles of $w$, $\tau$ and $\sigma^2$, it is natural to use the following factorized variational distribution in PMFVB
\beq
q(\theta)=q(w)q(\tau)q(\sigma^2).
\eeq
It can be seen from \eqref{eq: optimal MFVB} that the optimal MFVB distribution for $\tau$ is $q(\tau)=\prod_jq(\tau_j)$, where $q(\tau_j)$ has an Inverse-Gamma density with scale $\alpha_0+1/2$ and rate $\beta_0+\langle w_j^2 \rangle/2$. Here, as common in the MFVB literature, $\langle\cdot\rangle$ denotes the expectation with respect to the variational distribution $q$.  
The optimal MFVB distribution for $\sigma^2$ is Inverse-Gamma with scale $n/2$ and shape $\frac12\langle\sum_{i=1}^n(y_i-\eta(x_i,w))^2\rangle$.
The terms $\langle w_j^2 \rangle$ and $\langle\sum_{i=1}^n(y_i-\eta(x_i,w))^2\rangle$ can be approximated using the $w$-particles.
The optimal MFVB distribution for the weights $w$ has the log-density
\beq\label{eq: q(w) term}
\log q(w)=-\frac12\sum_{j=1}^{d_w}\langle\frac{1}{\tau_j}\rangle w_j^2-\frac12\langle\frac{1}{\sigma^2}\rangle\sum_{i=1}^n(y_i-\eta(x_i,w))^2,
\eeq
which indicates that $q(\d w)$ cannot be updated analytically.
Note that the expectation terms $\langle\frac{1}{\tau_j}\rangle$ and $\langle\frac{1}{\sigma^2}\rangle$ can be computed analytically.
Based on \eqref{eq: q(w) term}, a Langevin MC procedure in Algorithm \ref{Alg: particle MFVB-NN} is used to approximate the distribution $q(\d w)$.

\subsubsection*{A simulation study} We simulate data from the following non-linear model \citep{Tran:JCGS2019} 
\beqn\label{eq:continuous model}
y=5 + 10x_1 + \frac{10}{x_2^2+1} + 5x_3x_4 + 2x_4 + 5x_4^2 + 5x_5 + 2x_6 + \frac{10}{x_7^2+1} + 5x_8x_9 + 5x_9^2 + 5x_{10} +\epsilon,
\eeqn
where $\epsilon\sim\N(0,1)$, $(x_1,...,x_{20})^\top$ are generated from a multivariate normal distribution with mean zero and covariance matrix $(0.5^{|i-j|})_{i,j}$; the last ten variables are not in the regression. The training data has 100,000 observations, the validation and test datasets each have 10,000 observations.

We compare the predictive performance of PMFVB with Gaussian VB of \cite{Tran:JCGS2019},
SGLD of \cite{Welling.Teh:2011}, preconditioned SGLD of \cite{Li.et.al:2016} and 
the ADAM-based adaptive drift SGLD of \cite{Kim.et.al:2022}.
We use 300 particles in PMFVB.
For the SGLD methods, one must first transform $\tau$ and $\sigma^2$ into an unconstrained space,
then apply the Langevin MC to jointly sample both the weights $w$, the transformed $\sigma^2$ and the latent (transformed) $\tau$.
The dimension of this LMC is double that of the LMC used in the PMFVB method where one needs to sample $w$ only. 
We set the tuning parameters in the SGLD methods following the suggestions in the papers proposing the methods.

The performance metrics include the best validation-data partial predictive score (PPS) and the test-data PPS,
\beqn
\text{PPS}=-\frac{1}{|D|}\sum_{(x,y)\in D}p(y|x,\wh\theta)
\eeqn
where $D$ is the validation and test data, respectively, $\wh\theta$ is the posterior mean estimate
of the model parameters. 
We also report the test-data MSE and the CPU running time (in minutes).
For each method, the Bayesian neural network model is trained until the validation PPS no longer decreases after 100 iterations. 
Table \ref{tab: simulation regression NN} summarizes the results,
which are based on 10 different runs for each algorithm.
The PMFVB method achieves the best predictive performance; it is also stable across different runs as reflected by the standard deviations.
In terms of the running time, the precondioned SGLD is the most efficient.
We note, however, that the PMFVB is parallelisable and its running time can be greatly reduced if multiple-core computers are used. 

\begin{table}[H]
\begin{center}
\begin{tabular}{lcccc}
\hline\hline
Method&Validation PPS& Test PPS & Test MSE & CPU\\
\hline
Gaussian VB         &1.2305 (0.0678)  &1.2513 (0.0714)    &4.3795 (0.6384)  &18.78\\ 
SGLD                &1.3751 (0.1060)  &1.3881 (0.1055)    &4.9997 (1.1608)  &2.74\\ 
Precondioned SGLD   &1.0527 (0.1274)  &1.0676 (0.1335)    &3.1517 (0.8158)  &{\bf 1.36}\\         
Adam SGLD           &1.1849 (0.0485)  &1.2021 (0.0487)    &3.7987 (0.4874)  &4.26\\
PMFVB               &{\bf 0.8239} (0.0642)  &{\bf 0.8598} (0.0640)    &{\bf 2.0631} (0.2603)  &16.29\\         
\hline\hline
\end{tabular}
\end{center}
\caption{Simulation data: Predictive performance in term of the partial predictive score (PPS), the mean squared error (MSE) and CPU time (in minutes), averaged over 10 different runs. The numbers in brackets are the standard deviation across the replicates. The best scores for each performance metric are highlighted in bold. The structure of the neural net is (20,20,20,1), i.e., one input layer of 20 units, two hidden layers each of 20 units and one output unit.}\label{tab: simulation regression NN}
\end{table}

\subsubsection*{The HILDA data}
The Household, Income and Labour Dynamics in Australia (HILDA) Survey\footnote{The HILDA Survey was conducted by the Australian Government Department of Social Services (DSS). The findings and views reported in this paper, however, are those of the authors and should not be attributed to the Australian Government, DSS, or any of DSS’ contractors or partners.} data consists of many household-based variables about economic and personal well-being.
We apply the Bayesian neural network model to predict the Income, using 43 covariates many of which are dummy variables used to encode the categorical variables. The dataset is randomly divided into a training set of 14,010 observations for fitting the model, a validation set of of 1751 observations for stopping,
and a test set of 1751 observations for final performance evaluation. We use a neural network with three hidden layers, each having 50 units, and use the same algorithm settings as in the previous simulation example.
Table \ref{tab: HILDA} summarizes the result which shows that the PMFVB algorithm achieves the best predictive performance
together with the highest stability (across different runs).

\begin{table}[H]
\begin{center}
\begin{tabular}{lcccc}
\hline\hline
Method&Validation PPS& Test PPS & Test MSE & CPU\\
\hline
Gaussian VB         &$-0.0957$ (0.0071)     &$-0.1474$ (0.0143)         &0.2739 (0.0078)    &10.62\\ 
SGLD                &$-0.1272$ (0.0035)     &$-0.1850$ (0.0051)         &0.2541 (0.0026)    &2.40\\ 
Precondioned SGLD   &$-0.1065$ (0.0141)     &$-0.1621$ (0.0190)         &0.2612 (0.0037)    &5.42\\
Adam SGLD           &$-0.1321$ (0.0035)     &$-0.1897$ (0.0022)         &0.2516 (0.0011)    &{\bf 1.57}\\
PMFVB               &${\bf -0.1331}$ (0.0019)&${\bf -0.1932}$ (0.0014)  &{\bf 0.2498} (0.0007)    &9.07\\
\hline\hline
\end{tabular}
\end{center}
\caption{HILDA data: Predictive performance in term of the partial predictive score (PPS), the mean squared error (MSE) and CPU time (in minutes), averaged over 10 different runs. The numbers in brackets are the standard deviation across the replicas. The best scores for each performance metric are highlighted in bold. The structure of the neural net is (43,50,50,50,1), i.e. three hidden layers each of 50 units.}\label{tab: HILDA}
\end{table}

\subsubsection{Bayesian neural networks for classification}
Consider the neural network binary classification model
\bean
y_i &\sim&\text{Binomial}\big(1,p(x_i,w)\big),\\
p(x_i,w) &=&\frac{1}{1+\exp\big(-\eta(x_i,w)\big)},\;\;i=1,...,n
\eean
where $\eta(x_i,w)$ is the output from a neural network with the input $x_i$ and weights $w=\{w_j,j=1,...,d_w\}$.
We use the same regularisation prior as in the regression (\ref{eq:regpriorweights}). The model parameters are $\theta=\big(w,\tau=\{\tau_j,j=1,...,d_w\}\big)$. We consider the following factorization in the PMFVB
\[q(\theta)=q(w)q(\tau).\]
The optimal MFVB distribution for $\tau$ is $q(\tau)=\prod_jq(\tau_j)$, where $q(\tau_j)$ has an Inverse-Gamma density with scale $\alpha_0+1/2$ and rate $\beta_0+\langle w_j^2 \rangle/2$. 
The term $\langle w_j^2 \rangle$ can be approximated from the $w$-particles.
The optimal MFVB distribution for the weights $w$ has the log-density
\beq\label{eq: q(w) term classification}
\log q(w)=-\frac12\sum_{j=1}^{d_w}\langle\frac{1}{\tau_j}\rangle w_j^2+\sum_{i=1}^n\Big(y_i\eta(x_i,w)-\log(1+e^{\eta(x_i,w)})\Big).
\eeq
The expectation term $\langle\frac{1}{\tau_j}\rangle$ can be computed analytically.
Based on \eqref{eq: q(w) term classification}, a Langevin MC procedure in Algorithm \ref{Alg: particle MFVB-NN} is used to approximate the distribution $q(\d w)$.

\subsubsection*{The census data}
The census dataset obtained from the U.S. Census Bureau is available on the UCI Machine Learning Repository. 
The task is to classify whether a person’s income is over \$50K per year, based on 14 covariates including
age, work class, race. Using dummy variables to represent the categorical variables, there are a total of 103 input variables in the Bayesian neural network model.
The full dataset of 45,221 observations is divided into a training set (53\%), validation set (14\%) and test set (33\%).
We use 200 particles in the PMFVB method.
Table \ref{tab: census data} summarizes the results
which show that the PMFVB obtains the best predictive performance.

\begin{table}[H]
\begin{center}
\begin{tabular}{lcccc}
\hline\hline
Method&Validation PPS & Test PPS&Test MCR & CPU\\
\hline
SGLD                &0.3153 (0.0046)        &0.4304 (0.0348)        &0.1990 (0.0033)        &3.83\\
Precondioned SGLD   &0.3133 (0.0012)        &0.4407 (0.0251)        &0.2034 (0.0051)        &3.47\\
Adam SGLD           &0.3109 (0.0011)        &0.4422 (0.0211)        &0.2097 (0.0079)        &{\bf 3.20}\\         
PMFVB               &{\bf 0.3052} (0.0008)  &{\bf 0.4217} (0.0069)  &{\bf 0.1958} (0.0030)  &18.30\\
\hline\hline
\end{tabular}
\end{center}
\caption{Census data: Predictive performance in term of the partial predictive score (PPS), miss-classification rate (MCR), and CPU time (in minutes), averaged over 10 different runs. The numbers in brackets are the standard deviation across the runs. The best scores for each performance metric are highlighted in bold. The structure of the neural net is (103,100,100,1).}\label{tab: census data}
\end{table}

\section{Discussion}\label{sec: discussion}
We propose a particle-based MFVB procedure for Bayesian inference, which extends the scope of classical MFVB, is widely applicable and enjoys attractive theoretical properties.
The new method can also be used for training Bayesian deep learning models.

The main limitation of MFVB methods including PMFVB is the use of factorized variational distributions, which might fail to capture the dependence structure between the blocks of variables.
This limitation can be mitigated using the reparametrisation method of \cite{tan2021use}.
Write the target distribution as $\pi(x,y)=\pi_x(x)\pi_{y|x}(y|x)$. Let $b(x)$ and $H(x)$ be the gradient and minus Hessian of $\log \pi_{y|x}(y|x)$ at some point $y=y_0$. Assume that $H(x)$ is positive definite, consider the transformation $\wt y=H(x)^{1/2}\big(y-\mu(x)\big)$ with $\mu(x)=H(x)^{-1}b(x)+y_0$. The joint density of $x$ and $\wt y$ is 
\[\wt\pi(x,\wt y)=|H(x)|^{-1/2}\pi_x(x)\pi_{y|x}\big(H(x)^{-1/2}\wt y+\mu(x)|x\big).\]
The motivation is that, if $\pi_{y|x}(y|x)\sim N\big(\mu(x),H(x)^{-1}\big)$, then $\wt y\sim N(0,I)$ and is independent of $x$.
In general, we can expect that the dependence between $x$ and $\wt y$ is reduced and is much less than the dependence between $x$ and $y$. \cite{tan2021use} considers this reparametrisation approach in Gaussian Variational Bayes and documents significant improvement in posterior approximation accuracy. Coupling the reparametrisation approach with PMFVB could lead to an efficient technique for Bayesian inference. This research is in progress.

\section*{Appendix A: Derivation for the SV example}
The joint posterior of $\theta$ and latent $x$ is
\bean
\pi(\theta,x)&\propto&p(\theta)p(x|\theta)p(y|x,\theta)\\
&=&p(\mu)p(\phi)p(\sigma^2)N\big(x_1|\mu,\frac{\sigma^2}{1-\phi^2}\big)\prod_{t=2}^TN\big(x_t|\mu(1-\phi)+\phi x_{t-1},\sigma^2\big)\prod_{t=1}^T N\big(y_t|0,e^{x_t}\big),
\eean
where $p(\mu)=N(\mu|0,\sigma^2)$, $p(\phi)\propto\big((1+\phi)/2\big)^{a_0-1}\big((1-\phi)/2\big)^{b_0-1}$, and
$p(\sigma^2)\propto(\sigma^2)^{-(1+\alpha_0)}\exp(-\beta_0/\sigma^2)$.
Using \eqref{eq: optimal MFVB}, the optimal MFVB distribution $q(\mu)$ is $N(\mu_q,\sigma_q^2)$ with $\mu_q=B/A$ and $\sigma_q^2=1/A$, where
\beqn
A=\frac{1}{\sigma_0^2}+\frac{\alpha_{\sigma^2}}{\beta_{\sigma^2}}\langle1-\phi^2\rangle+(T-1)\frac{\alpha_{\sigma^2}}{\beta_{\sigma^2}}\langle(1-\phi)^2\rangle
\eeqn
and
\beqn
B=\frac{\alpha_{\sigma^2}}{\beta_{\sigma^2}}\langle(1-\phi^2)x_1\rangle+(T-1)\frac{\alpha_{\sigma^2}}{\beta_{\sigma^2}}\langle(1-\phi)\sum_{t=2}^T(x_t-\phi x_{t-1})\rangle,
\eeqn
with $\alpha_{\sigma^2}$ and $\beta_{\sigma^2}$ given below.
Recall that $\langle\cdot\rangle$ denotes the expectation with respect to the variational distribution $q$.  
The optimal MFVB distribution $q(\sigma^2)$ is Inverse-Gamma$(\alpha_{\sigma^2},\beta_{\sigma^2})$, where $\alpha_{\sigma^2}=\alpha_0+T/2$ and
\beqn
\beta_{\sigma^2}=\beta_0+\frac12\Big\langle(1-\phi^2)\big[(x_1-\mu_q)^2+\sigma_q^2\big]+\sum_{t=2}^T\big[(x_t-\mu_q(1-\phi)-\phi x_{t-1})^2+(1-\phi)^2\sigma_q^2\big]\Big\rangle.
\eeqn
All the expectations $\langle\cdot\rangle$ in the expressions above are with respect to $q(\phi,x)$, which can be estimated from the $(\phi,x)$-particles. 
The logarithm of the optimal MFVB density for $(\phi,x)$ is
\bean
\log q(\phi,x)&=&\Big\langle \log p(\phi)+\log p(x_1|\theta)+\sum_{t=2}^T\log p(x_t|x_{t-1},\theta)+\sum_{t=1}^T\log p(y_t|x_t) \Big\rangle+C\\
&=&\Big\langle(a_0-1)\log(1+\phi)+(b_0-1)\log(1-\phi)+\frac{1}{2}\log(1-\phi^2)-\frac{1-\phi^2}{2\sigma^2}(x_1-\mu)^2\\
&&-\sum_{t=2}^T\frac{1}{2\sigma^2}\big(x_t-\mu(1-\phi)-\phi x_{t-1}\big)^2-\sum_{t=1}^T\big(\frac{x_t}{2}+\frac12y_t^2e^{-x_t}\big)\Big\rangle+C,
\eean
where $C$ is a constant independent of $\phi$ and $x$.
For the LMC step, we need the gradient $\nabla_\phi \log q(\phi,x)$ and $\nabla_x\log q(\phi,x)$.
\bean
\nabla_\phi \log q(\phi,x)&=&\frac{a_0-1}{1+\phi}-\frac{b_0-1}{1-\phi}-\frac{\phi}{1-\phi^2}+\phi\frac{\alpha_{\sigma^2}}{\beta_{\sigma^2}}\big[(x_1-\mu_q)^2+\sigma_q^2\big]\\
&&+\frac{\alpha_{\sigma^2}}{\beta_{\sigma^2}}\sum_{t=2}^T\big[(x_{t-1}-\mu_q)(x_{t}-\mu_q)-\phi(x_{t-1}-\mu_q)^2+(1-\phi)\sigma_q^2\big],
\eean
\beqn
\nabla_{x_1}\log q(\phi,x)=-\frac{\alpha_{\sigma^2}}{\beta_{\sigma^2}}(1-\phi^2)(x_1-\mu_q)-\frac12+\frac{y_1^2}{2}e^{-x_1}+\phi\frac{\alpha_{\sigma^2}}{\beta_{\sigma^2}}\big(x_{2}-(1-\phi)\mu_q-\phi x_1\big),
\eeqn
\beqn
\nabla_{x_t}\log q(\phi,x)=-\frac12+\frac{y_t^2}{2}e^{-x_t}-
\frac{\alpha_{\sigma^2}}{\beta_{\sigma^2}}\big(x_{t}-(1-\phi)\mu_q-\phi x_{t-1}\big)
+\phi\frac{\alpha_{\sigma^2}}{\beta_{\sigma^2}}\big(x_{t+1}-(1-\phi)\mu_q-\phi x_{t}\big),
\eeqn
for $t=2,...,T-1$, and finally,
\beqn
\nabla_{x_T}\log q(\phi,x)=-\frac12+\frac{y_T^2}{2}e^{-x_T}-
\frac{\alpha_{\sigma^2}}{\beta_{\sigma^2}}\big(x_{T}-(1-\phi)\mu_q-\phi x_{T-1}\big).
\eeqn

\section*{Appendix B: Technical proofs}
\begin{proof}[Proof of Theorem \ref{the: theorem 1}]
Consider two measures in $\Q=\W_2(\X)\otimes \W_2(\Y)$: $q^{(1)}(\d x\times\d y)=q_x^{(1)}(\d x)q_y^{(1)}(\d y)$ and $q^{(2)}(\d x\times\d y)=q_x^{(2)}(\d x)q_y^{(2)}(\d y)$. Then,
\beq\label{eq: fact 1}
W_2^2(q^{(1)},q^{(2)})=W_2^2(q_x^{(1)},q_x^{(2)})+W_2^2(q_y^{(1)},q_y^{(2)}).
\eeq
With a generic measure $q(\d x\times\d y)=q_x(\d x)q_y(\d y)\in\Q$, write $F(q)$ as
\beq
F(q) = F_x(q_x)+F_y(q_y)+F_{xy}(q)
\eeq 
where
\[F_x(q_x)=\int_{\X}q_x(x)\log q_x(x)\d x,\;\;\;F_y(q_y)=\int_{\Y}q_y(y)\log q_y(y)\d y\]
and 
\[F_{xy}(q)=\int_{\X\times\Y}\big(-\log\pi(x,y)\big)q(x,y)\d x\times\d y.\]
To show that $F(q)$ is lower semi-continuous (l.s.c), consider a sequence of measures $\{q^n(\d x\times\d y)=q_x^n(\d x)q_y^n(\d y)\}_{n\geq1}\subset\W_2(\Q)$ weakly converging to $q^*(\d x\times\d y)=q_x^*(\d x)q_y^*(\d y)$, i.e.
\[W_2(q^n,q^*)\to0,~ \text{ as } n \to \infty.\]
Then \eqref{eq: fact 1} implies that 
\[W_2(q_x^n,q_x^*)\to0,\;\;\;\;\text{ and }\;\;\;\;\;W_2(q_y^n,q_y^*)\to0,\]
hence
\[q_x^n\stackrel{w}{\longrightarrow}q_x^*,\;\;\;\;\text{ and }\;\;\;\;\;q_y^n\stackrel{w}{\longrightarrow}q_y^*.\]
By Proposition 7.7 of \cite{Santambrogio:OTbook}, $F_x(q_x)$ and $F_y(q_y)$ are l.s.c; hence we have
\beq\label{eq: fact 2}
\liminf_n F_x(q_x^n)\geq F_x(q_x^*),\;\;\;\;\text{ and }\;\;\;\;\;\liminf_n F_y(q_y^n)\geq F_y(q_y^*).
\eeq
As $\log\pi(x,y)$ is continuous, and hence bounded, on the compact set $\X\times\Y$, by the definition of weak convergence, we have that
\beq\label{eq: fact 3}
\lim_n F_{xy}(q^n)= F_{xy}(q^*).
\eeq
From \eqref{eq: fact 2}-\eqref{eq: fact 3}, 
\beq\label{eq: fact 4}
\liminf_n F(q^n)\geq F(q^*),
\eeq
proving that $F(q)$ is l.s.c.

We now show that $F(q)$ is convex. Consider two measures $q^{(1)}, q^{(2)}\in\Q$ and any $t\in(0,1)$.
Because $f(z)=z\log(z)$ is convex, $F_x(q_x)=\int f(q_x)\d x$ and $F_y(q_y)=\int f(q_y)\d y$ are convex. Also, $F_{xy}(q)$ is linear and hence convex. Therefore,
\bean
F(tq^{(1)}+(1-t)q^{(2)}) &=& F_x(tq_x^{(1)}+(1-t)q_x^{(2)})+F_y(tq_y^{(1)}+(1-t)q_y^{(2)})+F_{xy}(tq^{(1)}+(1-t)q^{(2)})\\
&\leq& t\big(F_x(q_x^{(1)})+F_y(q_y^{(1)})+F_{xy}(q^{(1)})\big)+(1-t)\big(F_x(q_x^{(2)})+F_y(q_y^{(2)})+F_{xy}(q^{(2)})\big)\\
&=&tF(q^{(1)})+(1-t)F(q^{(2)}).
\eean
\end{proof} 

\begin{proof}[Proof of Corollary \ref{cor: unique solution}] 
As $\X$ and $\Y$ are compact, by the Prokhorov theorem, $\P(\X)$ and $\P(\Y)$ are compact (w.r.t. to the weak convergence, and also w.r.t the Wasserstein metric).
As $\P_2(\X)\subset\P(\X)$, for any sequence of measures $\{\mu_n\}$ in $\P_2(\X)$, there must exist a subsequence $\{\mu_{n_k}\}$ weakly converging to some measure $\mu\in\P(\X)$.
As $\X$ is compact, $\int_{\X}|x|^2\mu(\d x)<\infty$, hence $\mu\in\P_2(\X)$. This implies that $\P_2(\X)$ is compact.
Similarly, $\P_2(\Y)$ is compact, and therefore the product space $\Q=\W_2(\X)\otimes\W_2(\Y)$ is compact.
Recall that $\W_2(\X)$ (res. $\W_2(\Y)$) is $\P_2(\X)$ (res. $\P_2(\Y)$) equipped with the Wasserstein distance.
From Theorem \ref{the: theorem 1}, $F(q)$ is l.s.c on the compact space $\Q$, by the Weierstrass theorem, there exists $q^*\in\Q$ such that $F(q^*)=\min\{F(q):q\in\Q\}$.
The uniqueness of $q^*$ is implied by the fact that $F(q)$ is convex.
\end{proof}

\begin{proof}[Proof of Theorem \ref{the: Convergence of the particle MFVB}]
Given $q_y^{(t)}$, define
\[q_x^*(x)=\exp\Big(\E_{q_y^{(t)}}\big[\log\pi(x,y)\big]+C(q_y^{(t)})\Big),\]
where $C(q_y^{(t)}$ is the normalising constant. We have that
\bean
F(q^{(t)})=\KL(q_x^{(t)}q_y^{(t)}\|\pi)&=&\int q_x^{(t)}(x)q_y^{(t)}(y)\log\frac{q_x^{(t)}(x)q_y^{(t)}(y)}{\pi(x,y)}\d x\times\d y\\
&=&\int q_x^{(t)}(x)\Big(\log q_x^{(t)}(x)-\E_{q_y^{(t)}}\big(\log\pi(x,y)\big)\Big)\d x+E(q_y^{(t)})\\
&=&\KL(q_x^{(t)}\|q_x^*)+E(q_y^{(t)})+C(q_y^{(t)})
\eean
with $E(q_y^{(t)})=\int q_y^{(t)}(y)\log q_y^{(t)}(y)\d y$.
If the step size $h_x$ is sufficiently small, Lemma \ref{lem:Cheng and Bartlett} guarantees that 
\[\KL(q_x^{(t+1)}\|q_x^*)\leq \KL(q_x^{(t)}\|q_x^*);\]
hence
\beq\label{eq: F inequality 1}
F(q_x^{(t+1)}q_y^{(t)}) \leq F(q_x^{(t)}q_y^{(t)})=F(q^{(t)}).
\eeq
Given $q_x^{(t+1)}$, define
\[q_y^*(y)=\exp\Big(\E_{q_x^{(t+1)}}\big[\log\pi(x,y)\big]+C(q_x^{(t+1)})\Big).\]
We have that
\[F(q_x^{(t+1)}q_y^{(t)})=\KL(q_y^{(t)}\|q_y^*)+E(q_x^{(t+1)})+C(q_x^{(t+1)}).\]
Lemma \ref{lem:Cheng and Bartlett} guarantees that 
\[\KL(q_y^{(t+1)}\|q_y^*)\leq \KL(q_y^{(t)}\|q_y^*)\]
and hence
\beq\label{eq: F inequality 2}
F(q^{(t+1)})=F(q_x^{(t+1)}q_y^{(t+1)})\leq F(q_x^{(t+1)}q_y^{(t)}).
\eeq
From \eqref{eq: F inequality 1}-\eqref{eq: F inequality 2}, $F(q^{(t)})$ is reduced over $t$.
By Corollary \ref{cor: unique solution}, $q^{(t)}$ must converge to the unique  minimizer $q^*$ of $F(q)$.
\end{proof}

\begin{proof}[Proof of Theorem \ref{the: posterior consistency}]
Under the conditions (A1) and (A2), by Theorem 1 of \cite{ZhangGao:2019},
\beq\label{eq:ChapterVB_theory_convergence_rate}
\E_{p_0^{(n)}}\Big[\E_{q_n^*}\Big(\|\theta-\theta_0\|_2^2\Big)\Big]=O(\varepsilon_n^2+\gamma_n^2)
\eeq
with
\beq\label{eq:ChapterVB_theory_gamma2_n}
\gamma_n^2=\frac{1}{n}\min_{q\in\mathcal Q}\E_{p_0^{(n)}}\Big[\KL\big(q\|\pi_n\big)\Big].
\eeq 
Denote by
\[p^{(n)}(X^{(n)})=\int p_\theta^{(n)}(X^{(n)})\pi_0(\d\theta)\]
the marginal likelihood. For any $q\in\Q$, we have
\bea\label{eq:gamma 1}
\gamma_n^2&\leq&\frac{1}{n}\E_{p_0^{(n)}}\Big[\int\log\frac{q(\theta)p^{(n)}(X^{(n)})}{\pi_0(\theta)p_\theta^{(n)}(X^{(n)})}q(\d\theta)\Big]\notag\\
&=&\frac{1}{n}\KL(q\|\pi_0)+\frac{1}{n}\int \Big(p_0^{(n)}(X^{(n)})\int\log\frac{p^{(n)}(X^{(n)})}{p_\theta^{(n)}(X^{(n)})}q(\d\theta)\Big)\d X^{(n)}\notag\\
&=&\frac{1}{n}\KL(q\|\pi_0)+\frac{1}{n}\E_{q}\Big(\int p_0^{(n)}(X^{(n)})\log\frac{p_0^{(n)}(X^{(n)})}{p_\theta^{(n)}(X^{(n)})}\d X^{(n)}-\int p_0^{(n)}(X^{(n)})\log\frac{p_0^{(n)}(X^{(n)})}{p^{(n)}(X^{(n)})}\d X^{(n)}\Big)\notag\\
&=&\frac{1}{n}\KL(q\|\pi_0)+\frac{1}{n}\E_{q}\Big(\KL\big(p_0^{(n)}\|p_\theta^{(n)}\big)-\KL\big(p_0^{(n)}\|p^{(n)}\big)\Big)\notag\\ 
&\leq&\frac{1}{n}\KL(q\|\pi_0)+\frac{1}{n}\E_{q}\Big(\KL\big(p_0^{(n)}\|p_\theta^{(n)}\big)\Big).
\eea
Select $q(\theta)=N(\theta_0,1/nI_d)$, and estimate the first term in \eqref{eq:gamma 1}. 
\beqn
\frac{1}{n}\KL(q\|\pi_0)\leq\frac{1}{n}|\E_q\log q(\theta)|+\frac{1}{n}|\E_q\log\pi_0(\theta)|.
\eeqn
We have that
\[\frac{1}{n}|\E_q\log q(\theta)|\leq\frac{1}{n}\big(\frac{d}{2}\log(2\pi)+\frac{d}{2}\log n+\frac{d}{2}\big)=o(1).\]
By Assumption (A3),
\bean
\frac{1}{n}|\E_q\log\pi_0(\theta)|&\leq&\frac{1}{n}\Big(|\log\pi_0(\theta_0)|+\E_q|\log\pi_0(\theta)-\log\pi_0(\theta_0)|\Big)\\
&\leq&\frac{1}{n}\Big(|\log\pi_0(\theta_0)|+C_3\E_q\|\theta-\theta_0\|_2\Big)\\
&\leq&\frac{1}{n}\Big(|\log\pi_0(\theta_0)|+C_3\sqrt{\frac{d}{n}}\Big)=o(1).
\eean
Therefore,
\beq\label{eq:gamma evaluation 1}
\frac{1}{n}\KL(q\|\pi_0)=o(1).
\eeq
We now estimate the second term in \eqref{eq:gamma 1}. 
By Assumption (A3),
\bea\label{eq:gamma evaluation 2}
\frac{1}{n}\E_{q}\Big(\KL\big(p_0^{(n)}\|p_\theta^{(n)}\big)\Big)&\leq&\frac{1}{n}\E_{q}\Big(\E_{p_0^{(n)}}\big|\log p_\theta^{(n)}(X^{(n)})-\log p_{\theta_0}^{(n)}(X^{(n)})\big|\Big)\notag\\
&\leq&\frac{1}{n}C_5\E_{q}\big(\|\theta-\theta_0\|_2\big)\notag\\
&\leq&\frac{1}{n}C_5\sqrt{\frac{d}{n}}=o(1).
\eea
Equations \eqref{eq:gamma 1}, \eqref{eq:gamma evaluation 1} and \eqref{eq:gamma evaluation 2} imply that $\gamma_n^2=o(1)$.
Therefore, from \eqref{eq:ChapterVB_theory_convergence_rate} and noting that $\varepsilon_n^2=o(1)$, we have that 
\beqn
\E_{p_0^{(n)}}\Big[\E_{q_n^*}\Big(\|\theta-\theta_0\|_2^2\Big)\Big]=o(1).
\eeqn
By Markov's inequality
\beqn
\E_{p_0^{(n)}}\Big[q_n^*\big(\|\theta-\theta_0\|_2^2>\epsilon\big)\Big]\leq \frac{\E_{p_0^{(n)}}\Big[\E_{q_n^*}\Big(\|\theta-\theta_0\|_2^2\Big)\Big]}{\epsilon}\to 0,
\eeqn
implying
\beqn
q_n^*\Big(\|\theta-\theta_0\|_2^2>\epsilon\Big)\stackrel{n\to\infty}{\longrightarrow}0,\;\;\;\;p^{(n)}_0-a.s. 
\eeqn
\end{proof}

\bibliographystyle{apalike}
\bibliography{references}

\end{document}